\newtheorem{theo}{Theorem}[section]
\newtheorem{rque}{Remark}[section]
\newcommand{\Aspace}{\mathcal{A}}
\newcommand{\Sspace}{\mathcal{S}}
\setlist[enumerate]{itemsep=0mm}
\def\1{\mbox{1\hspace{-0.25em}l}}
\def\orcid#1{\kern .08em\href{https://orcid.org/#1}{\includegraphics[keepaspectratio,width=0.7em]{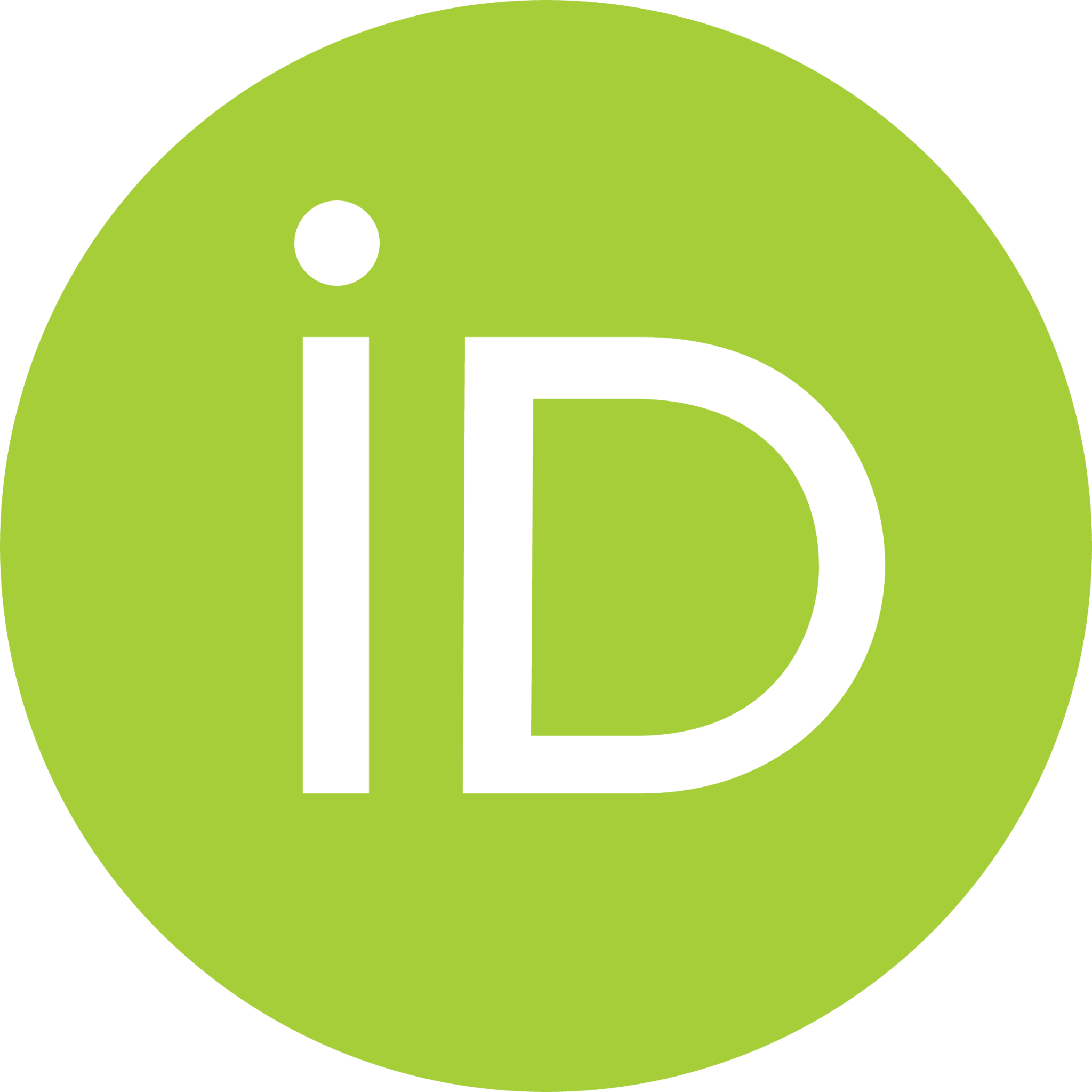}}}
\begin{document}

\title{Reinforcement Learning in Queue-Reactive Models: \\Application to Optimal Execution }
\author{
Tomas \textsc{Espana}$^{1}$\thanks{tomas.espana@princeton.edu}~~
Yadh \textsc{Hafsi}$^{2}$\thanks{yadh.hafsi@polytechnique.edu, \orcid{https://orcid.org/0009-0001-0686-0349}}~~
Fabrizio 
\textsc{Lillo}$^{3}$\thanks{fabrizio.lillo@sns.it, \orcid{https://orcid.org/0000-0002-4931-4057}}~~
Edoardo \textsc{Vittori}$^{4}$\thanks{edoardo.vittori@intesasanpaolo.com}\vspace{0.3cm}\\
\small $^{1}$ ORFE, Princeton University, Princeton, NJ, USA\\
\small $^{2}$ CMAP, École Polytechnique, Palaiseau, France\\
\small $^{3}$ Scuola Normale Superiore, Pisa, Italy\\
\small $^{4}$ Intesa Sanpaolo, Milan, Italy\\
}

\date{\today}
\maketitle
\begin{abstract}
 We investigate the use of Reinforcement Learning for the optimal execution of meta-orders, where the objective is to execute incrementally large orders while minimizing implementation shortfall and market impact over an extended period of time. Departing from traditional parametric approaches to price dynamics and impact modeling, we adopt a model-free, data-driven framework. Since policy optimization requires counterfactual feedback that historical data cannot provide, we employ the Queue-Reactive Model to generate realistic and tractable limit order book simulations that encompass transient price impact, and nonlinear and dynamic order flow responses. Methodologically, we train a Double Deep Q-Network agent on a state space comprising time, inventory, price, and depth variables, and evaluate its performance against established benchmarks. Numerical simulation results show that the agent learns a policy that is both strategic and tactical, adapting effectively to order book conditions and outperforming standard approaches across multiple training configurations. These findings provide strong evidence that model-free Reinforcement Learning can yield adaptive and robust solutions to the optimal execution problem.
\end{abstract}
\hspace{10pt}
\\
\noindent {\bf Keywords~:} Optimal Execution, Reinforcement Learning, Queue-Reactive Model, Limit Order Book, Market Microstructure, Price impact.
\section{Introduction}
\label{intro}
Executing large orders efficiently is a fundamental challenge in electronic financial markets. Large transactions, typically initiated by institutional investors such as banks, asset managers, hedge funds, or proprietary trading firms, often consume visible liquidity across multiple price levels of the limit order book (LOB), generating significant price impact. This impact is manifested through both immediate price shifts and subsequent order flow reactions and motivates the need for execution strategies that optimally balance market impact against timing risk. Consequently, the design of optimal execution strategies has become a central topic in market microstructure and algorithmic trading research.

Several studies have deepened the theoretical and empirical understanding of market impact. Refs \cite{bouchaud2003fluctuations} and \cite{bouchaud2009markets} documented the transient and concave nature of impact, highlighting its origin in the interplay between order flow, liquidity consumption, and market participant behavior. These empirical findings motivated the introduction of new constraints and mechanisms in the optimal execution problem. The first formal treatment of the problem is due to \cite{optimal_bertsimas_1998}, who derive closed-form solutions under specific assumptions using dynamic programming. This framework was extended by \cite{almgren}, who introduced both permanent and temporary market impact components, as well as a risk-aversion parameter. Subsequent research has generalized the Almgren-Chriss model along several dimensions (see \cite{optimal_almgren_2003, optimal_huberman_2005, optimal_gatheral_2011, optimal_almgren_2012, optimal_obizhaeva_2013, optimal_cheridito_2014, gueant2015general, optimal_cartea_2015, incorporating_cartea_2016, optimal_curato_2017, optimal_digiacinto_2022, optimal_cartea_2022, chevalier2024, chevalier2025}). Notably, \cite{gatheral2012transient} established that transient impact models must satisfy a no-dynamic-arbitrage condition, which tightly links the functional form of impact decay to the underlying price dynamics. This result rules out many ad-hoc impact kernels and provides structural constraints on any admissible transient impact model. \cite{optimal_obizhaeva_2013} further developed this direction by introducing a continuous-time propagator model in which impact decays through limit-order-book resilience.

Although these developments significantly enriched our understanding of execution costs and impact mechanisms, they still rely on strong parametric and structural assumptions that might be in contrast with the true data generating process and might be difficult to calibrate empirically. Model parameters such as impact coefficients, volatility processes, and resilience rates remain highly context-dependent and vary across time and assets, limiting the robustness of these approaches. In fact, traditional methods for solving stochastic control problems face limitations. Closed-form solutions are rare and require restrictive assumptions on the model dynamics and objective function. Numerical approaches typically involve solving the Hamilton-Jacobi-Bellman (HJB) equation, quasi-variational inequalities, or backward stochastic differential equations (BSDEs). These formulations often rely on viscosity solution theory to establish existence, uniqueness, and regularity of the value function. Yet, the associated numerical schemes, such as Monte Carlo methods for BSDEs and finite-difference methods for PDEs, suffer from the curse of dimensionality. As a result, these techniques are generally limited to problems with low-dimensional state spaces.

\paragraph{Related Works.} To address these limitations, reinforcement learning methods \cite{reinforcement_sutton_1998} have emerged as data-driven and model-free alternatives for sequential decision problems. Unlike classical approaches, reinforcement learning (RL) does not require strong parametric assumptions on market dynamics. In the context of optimal execution, RL provides a flexible framework to learn trading policies directly from simulated or historical market data. The first application of RL to optimal execution was introduced by \cite{reinforcement_nevmyvaka_2006}, who trained a tabular Q-learning agent to minimize implementation shortfall using historical trade data. Building on this, \cite{hendricks_2014} proposed a hybrid approach that combined the Almgren–Chriss (AC) model with Q-learning, allowing the agent to adjust its execution trajectory dynamically based on the observed market state. To overcome the limitations of tabular methods in high-dimensional settings, \cite{double_ning_2018} employed Double Deep Q-Networks (DDQN) (see \cite{humanlevel_mnih_2015,vanhasselt_2016}), integrating deep neural networks with Q-learning to generalize across continuous state spaces and mitigate value overestimation bias. More recently, \cite{macri_2024} applied DDQN within the AC framework under time-varying liquidity conditions, demonstrating improved performance relative to benchmark strategies, while \cite{endtoend_lin_2020} employed Proximal Policy Optimization (PPO) (see \cite{proximal_schulman_2017}) to train agents that learn directly from limit order book data using sparse reward structures.

Existing approaches rely on specific market impact assumptions and on historical data, which do not capture how market conditions evolve in response to trading actions. It therefore remains unclear what reinforcement learning algorithms can learn in more realistic, microstructure-based settings that incorporate endogenous market impact. In this work, we address this question using a limit order book (LOB) simulator that models both direct market impact, from liquidity consumption, and indirect impact, from participants’ reactions to trades. Many market simulators have been proposed in the literature, including agent-based models \cite{abides_byrd_2020, minimal_alfi_2009, econophysics_chakraborti_2011, agentbased_hamill_2015} and generative models \cite{ generating_li_2020, learning_coletta_2022, limit_cont_2023, generative_nagy_2023, lobbench_nagy_2025}. Recent studies have explored such simulation-based frameworks \cite{hafsi2024optimal,cheridito2025reinforcement}. While these approaches can capture complex market behaviors, they can be more difficult to implement and calibrate, and, more importantly, they often lack analytical tractability. 

\paragraph{Main Contributions.} We employ the Queue Reactive Model (QRM) (see \cite{huang_2015}). The QRM provides a realistic yet tractable description of short-term limit-order-book dynamics. Instead of assuming that price changes follow a simple diffusion, it treats the best bid and ask as queues of standing orders that evolve through limit-order arrivals, cancellations, and market-order executions. The rates at which these events occur depend on the current state of the book, particularly the size imbalance between the bid and ask queues, which means that the model naturally links order-flow pressure to short-term price movements. When either queue is depleted, the price moves by one tick, and new queue sizes are redrawn from empirical distributions calibrated to market data. This simple mechanism reproduces key microstructure features such as liquidity resilience, mean-reverting price behavior, and transient price impact: a trade that consumes liquidity temporarily shifts the state of the book, altering subsequent order flow and gradually decaying as liquidity replenishes. Because QRM captures both direct and indirect market impact within a statistically grounded framework, it serves as a practical simulator of endogenous market reactions. 

The goal of this paper is to propose a methodology to build a theoretically grounded and microstructure-consistent training and testing environment for RL. Specifically, the QRM is analytically tractable and ergodic, reproducing key stylized facts of LOB dynamics such as resilience, transient impact, and order-flow imbalance. We use model parameters calibrated from real market data and train an RL agent in this environment to learn execution strategies that generalize across endogenous market states. In contrast to existing approaches, our framework captures both direct impact (via liquidity consumption) and indirect impact (via feedback effects on order flow), offering a more realistic and interpretable platform for learning execution policies. 

\vspace{0.3cm}

\begin{figure}[H]
    \centering
    \begin{tikzpicture}[
        node distance=5.5cm,
        every node/.style={font=\sffamily, rounded corners, draw=black, thick, minimum height=1.4cm, minimum width=4cm, align=center},
        arrow/.style={->, thick}
    ]
        \node (step1) {Calibrate QRM \\ on real market data};
        \node (step2) [right of=step1] {Learn optimal execution strategy \\ with RL in ergodic environment};
        \node (step3) [right of=step2] {Deploy in real \\ market environment};

        \draw[arrow] (step1) -- (step2);
        \draw[arrow] (step2) -- (step3);
    \end{tikzpicture}
    \caption{Proposed methodology: learning optimal execution strategies using RL.}
    \label{fig:methodology}
\end{figure}
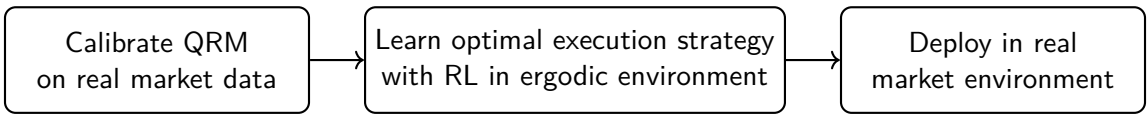

This paper is organized as follows. Section \ref{sec:opt_exec_pb} formalizes the optimal execution problem and establishes the notation used throughout. Section \ref{sec:QRM} introduces the Queue-Reactive Model (QRM), which provides the market environment.  Section \ref{sec:mdp} embeds this problem within a Reinforcement Learning (RL) and Markov Decision Process (MDP) framework, describing how learning-based agents interact with the market. Finally, Section \ref{sec:results} presents and discusses the results of our experiments.

\section{Problem Formulation }\label{sec:opt_exec_pb}
The optimal execution problem consists in executing a trade of $X_0$ shares over a fixed time horizon $[0, T]$. We consider here the problem of optimally executing a buy metaorder. When constructing such a strategy, it is essential to account for the immediate transaction impact of trades, their temporary price effects, and the potential long-term consequences arising from persistent market responses. A terminal penalty term may also be introduced to reflect the cost associated with failing to complete the purchase by the end of the horizon.

In a discrete-time setting with $N{+}1$ decision points, the purchasing strategy is formulated as a sequential decision process, in which the trader determines the quantity $\Delta x_{\tau_k}$ to buy at each time step $\tau_k\!=\!kT/N$, for $k\!\in\!\{0, \ldots, N\}$, with $\tau_0\!=\!0$ and $\tau_N\!=\!T$. The cumulative purchases form a trajectory $\{x_{\tau_0}, \ldots, x_{\tau_N}\}$, where $x_{\tau_k}$ denotes the cumulative number of shares acquired by time $\tau_k$. By construction, $x_0\!=\!0$, and full completion requires $x_T\!=\!X_0$ at the terminal time. Let $P_{\tau_k}$ denote the average execution price for the purchase $\Delta x_{\tau_k}\!=\!x_{\tau_k}\!-\!x_{\tau_{k-1}}$, which depends on both the current and all preceding trades and incorporates all transaction costs. The objective of a risk-neutral trader is to minimize the expected total cost of purchase over the horizon $T$:
\begin{equation}
\label{control_problem_purchase}
\begin{gathered}
\min_{x \in \mathcal{A}} \mathbb{E}\left[\sum_{k=0}^N P_{\tau_k} \Delta x_{\tau_k}\right],\\
\text{where} \quad 
\mathcal{A} = \Big\{(x_{\tau_0}, x_{\tau_1}, \ldots, x_{\tau_N})\in \mathbb{R}^{N+1}_+ : \sum_{k=0}^N \Delta x_{\tau_k} = X_0 \Big\}.
\end{gathered}
\end{equation}
The cost functional in Equation \eqref{control_problem_purchase} is directly related to the Implementation Shortfall (IS) measure introduced by \cite{Perold1988}, which quantifies  the deviation between the realized cost of execution and the cost that would have been incurred had all shares been purchased instantaneously at the initial market price $P_0$. Formally, for a purchasing strategy, the IS is defined as
\begin{equation*}
\text{IS} = \sum_{k=0}^N P_{\tau_k} \, \Delta x_{\tau_k} - X_0 P_0,
\end{equation*}
where the first term represents the realized cost of execution and the second term 
corresponds to the paper cost associated with immediate execution at $P_0$. Minimizing the expected total cost is therefore equivalent to minimizing the expected IS, since the benchmark term $X_0 P_0$ is constant and independent of the trading strategy. 
Economically, this formulation captures the fundamental trade-off between 
\emph{market impact} and \emph{timing risk} as in \cite{almgren}: executing too quickly increases costs through adverse price impact, 
while executing too slowly exposes the trader to unfavorable price movements. Hence, minimizing the expected IS is equivalent to finding the purchasing trajectory 
that optimally balances liquidity consumption and exposure to price risk.

\section{The Queue-Reactive Model}\label{sec:QRM}
The QRM, introduced by \cite{huang_2015}, provides a stochastic representation of the limit order book specifically tailored to large-tick assets.  This mechanism enables the model to reproduce the stylized facts
in high-frequency markets, including the persistence of order-flow imbalance, asymmetric liquidity profiles, and mean-reverting mid-price dynamics. 
\subsection{Description of the Market Simulation}
At each time $t$, the state of the LOB is represented by a $2K$-dimensional vector
$$X(t) = (q_{-K}(t), \ldots, q_{-1}(t), q_1(t), \ldots, q_K(t)),$$
where $K$ denotes the number of visible price levels on each side of the book. The quantity $q_i(t)$ denotes the standing volume at level $Q_i$ at time $t$ priced at
$$p_i = p_{\mathrm{ref}} + \frac{i\,\delta}{2}, \quad \forall i \in \{-K, \ldots, -1, 1, \ldots, K\},$$
with $\delta$ the tick size and $p_{\mathrm{ref}}$ an unobservable \emph{reference price} centered within the LOB. By convention, $Q_{-i}$ denotes a bid level and $Q_i$ an ask level. Formally, $X(t)$ evolves as a continuous-time Markov jump process taking values in $\mathbb{N}^{2K}$ with generator matrix $\mathcal{L}$ specified by 
\begin{align}
    & \mathcal{L}_{l, l+e_i} = f_i(l),\\
    & \mathcal{L}_{l, l-e_i} = g_i(l),\\
    & \mathcal{L}_{l, l} = - \sum_{p \neq l} \mathcal{L}_{l,p},\\
    & \mathcal{L}_{l,p} = 0 \quad \text{otherwise}, 
\end{align}
with $l=(l_{-K}, \ldots, l_{-1}, l_1, \ldots, l_K)\in\mathbb{N}^{2K}$ and $e_i$ denoting the vector $i$ of the canonical basis of $\mathbb{R}^{2K}$. Each queue $q_i(t)$ evolves as a one-dimensional birth–death process governed by the state-dependent intensities $\lambda_i^L(q_i(t))$ for limit order arrivals, $\lambda_i^M(q_i(t))$ for market orders consuming liquidity at level $i$, and $\lambda_i^C(q_i(t))$ for order cancellations. Note there is no bid-ask distinction as we suppose bid-ask symmetry. The transition rates at time $t$ are
$$f_i(X(t)) = \lambda_i^L(q_i(t)),~~\textrm{and}~~ g_i(X(t)) = \lambda_i^M(q_i(t)) + \lambda_i^C(q_i(t)),$$
for all $i\in[-K,\dots,-1,1,\dots,K]$. Conditionally on the current state of the book, order arrivals at each level follow independent Poisson processes. The dependence of intensities on queue sizes induces both auto and cross correlations in the order flow, producing realistic microstructural dynamics.
\begin{rque}
  We retain Model~1 of \cite{huang_2015}, where the intensities $\lambda_i^L$, $\lambda_i^M$, and $\lambda_i^C$ depend solely on the size of the corresponding queue $q_i(t)$. Empirical calibration in \cite{huang_2015} shows that this specification captures most of the variation in order‐flow intensities, with only minor gains in likelihood from adding neighboring queues or imbalance. Later studies \cite{ergodicity_huang_2017} confirm that richer dependencies mainly improve qualitative realism, while the overall quantitative fit remains comparable.
\end{rque}

\subsection{Invariant Distribution and Ergodicity}
 
 Under mild assumptions,~\cite{huang_2015} prove that the $2K$-dimensional Markov jump process $X$ is ergodic. This means that there exists a unique invariant probability measure $\pi$ and that the process converges to it from any start.
 \begin{theo}[Ergodicity]\label{thm:qrm-ergodic}
Assume that
\begin{enumerate}[label=(\roman*)]
\item there exist $C_{\mathrm{bound}}\in\mathbb{N}$ and $\delta_0>0$ such that, for all $i\in[-K,\dots,-1,1,\dots,K]$ and any $p=(p_{-K}, \ldots, p_{-1}, p_1, \ldots, p_K)\in\mathbb{N}^{2K}$ with $p_i>C_{\mathrm{bound}}$, we have 
      $$f_i(p)-g_i(p)\le -\delta_0;$$
\item there exists $H>0$ such that $\sum_{i} f_i(p)\le H$ for every state $p\in\mathbb{N}^{2K}$.
\end{enumerate}
Then $X$ is non-explosive, irreducible, positive recurrent, and therefore
ergodic with a unique invariant probability measure $\pi$.\end{theo}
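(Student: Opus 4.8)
The plan is to establish, in order, that the jump process $X$ is (a) non-explosive, (b) irreducible, and (c) positive recurrent, the last via a Foster--Lyapunov drift inequality; ergodicity together with the existence and uniqueness of $\pi$ then follows from the standard theory of countable-state Markov processes (Meyn--Tweedie-type results), the convergence being in fact geometric because the Lyapunov function will be exponential. \emph{Non-explosion} comes directly from assumption (ii): every birth transition $l\mapsto l+e_i$ fires at aggregate rate $\sum_i f_i(X(t))\le H$, so the process counting births on $[0,t]$ is stochastically dominated by a Poisson process of rate $H$, hence a.s.\ finite for each $t$. Since the $\ell^1$-norm $\|X(t)\|_1$ increases only at births and decreases only at deaths, $\|X(t)\|_1\le\|X(0)\|_1+\#\{\text{births in }[0,t]\}<\infty$ a.s., so on any finite interval $X$ stays in a finite subset of $\mathbb{N}^{2K}$ where the total jump rate is bounded; thus only finitely many jumps occur before any finite time and the explosion time is a.s.\ $+\infty$.

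\emph{Irreducibility} uses that in the QRM the limit-order, market-order and cancellation intensities are strictly positive, so $f_i(l)>0$ for every $l$ and $g_i(l)>0$ whenever $l_i\ge1$ (part of the ``mild assumptions''). From any state one reaches the origin by lowering each coordinate one unit at a time through death transitions, and from the origin one builds up any prescribed state through birth transitions; each such finite path has positive probability, so all states of $\mathbb{N}^{2K}$ communicate. The core step is \emph{positive recurrence}, and its only subtlety is the choice of Lyapunov function: the naive candidate $V(l)=\|l\|_1$ does \emph{not} work, since assumption (i) yields only a drift of $-\delta_0$ per overpopulated queue, which a single large queue cannot use to beat the uniform bound $H$ on the total birth rate, and moreover the $g_i$ are not assumed bounded. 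Instead I would take $V(l)=\sum_i e^{\theta l_i}$ for a small $\theta>0$ to be fixed.

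A direct computation gives
\[\mathcal{L}V(l)=\sum_i e^{\theta l_i}\Big[f_i(l)\,(e^{\theta}-1)+g_i(l)\,(e^{-\theta}-1)\,\mathbf{1}_{\{l_i\ge1\}}\Big].\]
For a coordinate with $l_i>C_{\mathrm{bound}}$, assumption (i) gives $g_i(l)\ge f_i(l)+\delta_0$, and since $e^{-\theta}-1<0$ and $f_i(l)\le H$ by (ii), the bracket is at most $H\,(e^{\theta}+e^{-\theta}-2)+\delta_0\,(e^{-\theta}-1)$, which is strictly negative, say $\le-\tfrac{\theta\delta_0}{4}$, once $\theta$ is small enough. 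For a coordinate with $l_i\le C_{\mathrm{bound}}$ one drops the non-positive death term and bounds the birth contribution by $e^{\theta C_{\mathrm{bound}}}(e^{\theta}-1)\,f_i(l)$, whose sum over such $i$ is at most $H\,e^{\theta C_{\mathrm{bound}}}(e^{\theta}-1)$ by (ii). Combining the two regimes and using $\sum_{i:\,l_i>C_{\mathrm{bound}}}e^{\theta l_i}\ge V(l)-2K e^{\theta C_{\mathrm{bound}}}$ yields $\mathcal{L}V(l)\le -cV(l)+d$ for constants $c>0$ and $d<\infty$. Since the sublevel sets $\{V\le R\}$ are finite, this is exactly the drift condition required; combined with (a) and (b) it gives positive recurrence, a unique invariant probability measure $\pi$, and convergence of the law of $X(t)$ to $\pi$, i.e.\ ergodicity.

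I expect the only genuinely delicate point to be the bracket estimate in the overpopulated regime together with the accompanying choice of $\theta$: there $g_i$ is controlled only from below, so one must exploit the sign of $e^{-\theta}-1$ rather than any size bound on $g_i$, and $\theta$ must be taken small enough that the positive $O(\theta^2)$ birth term is dominated by the negative $O(\theta)$ term. (Under Model~1, where $f_i$ and $g_i$ depend only on $l_i$, there is a shortcut: $X$ is then a product of $2K$ independent one-dimensional birth--death chains, each ergodic by the classical ratio criterion under (i)--(ii), and a finite product of ergodic chains is ergodic; but the Lyapunov argument above does not need this decoupling and applies to the general generator in the statement.)
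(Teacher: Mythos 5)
The paper does not actually prove this statement: it defers entirely to Theorem~2.1 of Huang, Lehalle and Rosenbaum (2015), whose argument is likewise a Foster--Lyapunov drift criterion in the style of Meyn and Tweedie. Your self-contained version is correct and in the same spirit, and the one genuinely non-routine choice you make --- replacing the linear candidate $\|l\|_1$ (or the product form $e^{\theta\|l\|_1}$, which fails for the same reason) by the separable exponential $V(l)=\sum_i e^{\theta l_i}$ so that a \emph{single} overpopulated queue contributes a negative term of order $\theta\delta_0 e^{\theta l_i}$ that dominates both the $O(\theta^2)$ self-contribution and the $O(\theta)$ contribution of the small queues --- is exactly the right fix; your bracket estimate $f_i(e^{\theta}+e^{-\theta}-2)+\delta_0(e^{-\theta}-1)$ correctly exploits only the lower bound on $g_i$ via the sign of $e^{-\theta}-1$, and the resulting inequality $\mathcal{L}V\le -cV+d$ with finite sublevel sets does yield positive recurrence and in fact geometric ergodicity. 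Two points deserve to be made explicit rather than left implicit. First, irreducibility is \emph{not} a consequence of hypotheses (i)--(ii) as stated; it requires the strict positivity of the arrival and departure intensities ($f_i(l)>0$ for all $l$, $g_i(l)>0$ for $l_i\ge 1$), which is indeed among the ``mild assumptions'' of the cited reference but should be stated as a hypothesis, since otherwise the conclusion can fail. Second, your non-explosion argument has a mild circularity: counting ``births in $[0,t]$'' presupposes the process is defined up to time $t$. The clean version works on the minimal process up to its explosion time $\zeta$: on $\{\zeta\le t\}$ infinitely many jumps occur in $[0,\zeta)$, but a thinning/domination argument bounds the number of births by a Poisson variable of mean $Ht$, and the number of deaths cannot exceed $\|X(0)\|_1$ plus the number of births, a contradiction; hence $\zeta=\infty$ a.s. With these two clarifications your proof stands on its own and matches what the citation delivers.
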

\begin{proof}
    See Theorem $2.1$ in \cite{huang_2015}.
\end{proof}
The invariant distribution $\pi_i$ of the limit $Q_i$ is given by the intensities of the process introduced previously. Define the arrival/departure ratio $\rho_i$ by 
\begin{align}
    \rho_i(n) = \frac{\lambda_i^L(n)}{\lambda_i^C(n+1) + \lambda_i^M(n+1)}.
\end{align}
The invariant distribution satisfies
\begin{align}
     \pi_i(0) = \left( 1 + \sum_{n=1}^\infty \prod_{j=1}^n \rho_i(j-1) \right)^{-1},~~\textrm{and}~~ \pi_i(n) = \pi_i(0) \prod_{j=1}^n \rho_i(j-1).
\end{align}
\subsection{Price Dynamics}
\label{financial models}
The process $X$ described above characterizes the LOB dynamics for a fixed reference price $p_{\mathrm{ref}}$. To endogenize price movements, two additional parameters are introduced, $\theta$ and $\theta^{\mathrm{reinit}}$. Whenever the mid-price $p_{\mathrm{mid}}$ changes, the reference price $p_{\mathrm{ref}}$ moves by one tick in the same direction with probability $\theta$, provided that the corresponding best queue is depleted ($q_{\pm1}\!=\!0$). Following such a reference-price change, the LOB state is either redrawn from its invariant distribution $\pi$, centered around the new $p_{\mathrm{ref}}$, with probability $\theta^{\mathrm{reinit}}$, or the standing volumes are shifted accordingly with probability $1\!-\!\theta^{\mathrm{reinit}}$. Figure \ref{fig:lob_tree} illustrates these mechanisms when a market order depletes the best ask volume.
\begin{figure}[H]
\centering
\begin{tikzpicture}[>=Latex,thick]
  \def\H{6.5}       
  \def\Y{-0.8}      
  \def\DX{5.1}      

  \path[opacity=0] (-\DX-1.6,\H+1.2) rectangle (\DX+1.6,\Y-1.0);

  \tikzset{edge/.style={-{Latex[length=1.8mm,width=1mm]},
          line width=0.8pt,shorten >=4pt,shorten <=4pt}}

\node (F1) at (0,\H) {%
  \begin{tikzpicture}[x=1.15cm,y=0.5cm,>=Latex,thick,scale=0.66]
    \definecolor{mplC0}{HTML}{1F77B4} 
    \definecolor{mplC1}{HTML}{FF7F0E} 
    \definecolor{mplC3}{HTML}{D62728} 
    \colorlet{axiscol}{black!70}

    \colorlet{bidcol}{mplC0}
    \colorlet{bidline}{mplC0!60!black}

    \colorlet{askcol}{mplC3}
    \colorlet{askline}{mplC3!60!black}

    \colorlet{askdepcol}{mplC1}
    \colorlet{askdepline}{mplC1!60!black}

    \def\barw{0.42}
    \def\bidheights{4.0,5.2,3.2}
    \def\askheights{1.1,2.2,4.6}

    \draw[axiscol,line width=0.9pt] (-3.2,0) -- (4.2,0);
    \draw[axiscol,line width=0.9pt,-{Latex[length=3mm]}] (4.2,0) -- +(0.3,0);

    \draw[axiscol,line width=0.9pt,-{Latex[length=2mm]}] (0,6) -- (0,0);
    \node[fill=white,draw,rounded corners=2pt,inner sep=2pt,font=\footnotesize]
         at (0,6.35) {$p_{\mathrm{ref}}\!=\!p_{\mathrm{mid}}\!=\!1.005$};

    \foreach \h [count=\i] in \bidheights {
      \pgfmathsetmacro\x{-3.5+\i}
      \fill[bidcol,rounded corners=2pt,fill opacity=0.50] (\x-\barw,0) rectangle (\x+\barw,\h);
      \draw[bidline,rounded corners=2pt] (\x-\barw,0) rectangle (\x+\barw,\h);
    }

    \foreach \h [count=\i] in \askheights {
      \pgfmathsetmacro\x{-0.5+\i}
      \ifnum\i=1
        \def\thisfill{askdepcol}
        \def\thisdraw{askdepline}
        \xdef\bestaskx{\x}
        \xdef\bestaskh{\h}
      \else
        \def\thisfill{askcol}
        \def\thisdraw{askline}
      \fi
      \fill[\thisfill,rounded corners=2pt,fill opacity=0.50] (\x-\barw,0) rectangle (\x+\barw,\h);
      \draw[\thisdraw,rounded corners=2pt] (\x-\barw,0) rectangle (\x+\barw,\h);
    }

    \draw[axiscol,line width=0.9pt,-{Latex[length=2mm]}]
      (\bestaskx,{\bestaskh+3}) -- (\bestaskx,\bestaskh);
    \node[fill=white,draw,rounded corners=2pt,inner sep=2pt,font=\scriptsize]
      at (\bestaskx,{\bestaskh+3.5}) {MO at best ask};

    \foreach \x in {-2.5,-1.5,-0.5,0.5,1.5,2.5,3.5}
      \draw[axiscol,line width=0.5pt] (\x,0) -- (\x,-0.35);

    \node[below=8pt,font=\footnotesize] at (-2.5,0) {0.98};
    \node[below=8pt,font=\footnotesize] at (-1.5,0) {0.99};
    \node[below=8pt,font=\footnotesize] at (-0.5,0) {1.00};
    \node[below=8pt,font=\footnotesize] at ( 0.5,0) {1.01};
    \node[below=8pt,font=\footnotesize] at ( 1.5,0) {1.02};
    \node[below=8pt,font=\footnotesize] at ( 2.5,0) {1.03};
    \node[below=8pt,font=\footnotesize] at ( 3.5,0) {1.04};
  \end{tikzpicture}
};

\coordinate (SplitL) at (-1.7,3.7);

\def\leafscale{0.55}

\node[anchor=south] (F3) at (-\DX,\Y) {%
  \begin{tikzpicture}[x=1.15cm,y=0.5cm,>=Latex,thick,scale=\leafscale]
    \definecolor{mplC0}{HTML}{1F77B4} 
    \definecolor{mplC3}{HTML}{D62728} 
    \colorlet{bidcol}{mplC0}
    \colorlet{askcol}{mplC3}
    \colorlet{bidline}{mplC0!60!black}
    \colorlet{askline}{mplC3!60!black}
    \colorlet{axiscol}{black!70}

    \def\barw{0.42}

    \draw[axiscol,line width=0.9pt] (-3.2,0) -- (4.2,0);
    \draw[axiscol,line width=0.9pt,-{Latex[length=3mm]}] (4.2,0) -- +(0.3,0);

    \draw[axiscol,line width=0.9pt,-{Latex[length=2mm]}] (1.0,5) -- (1.0,0);
    \node[fill=white,draw,rounded corners=2pt,inner sep=2pt,font=\footnotesize]
         at (1.0,5.35) {$p_{\mathrm{ref}}\!=\!p_{\mathrm{mid}}\!=\!1.015$};

    \fill[bidcol,rounded corners=2pt,fill opacity=0.50] (-1.5-\barw,0) rectangle (-1.5+\barw,5.2);
    \draw[bidline,rounded corners=2pt]     (-1.5-\barw,0) rectangle (-1.5+\barw,5.2);
    \fill[bidcol,rounded corners=2pt,fill opacity=0.50] (-0.5-\barw,0) rectangle (-0.5+\barw,3.2);
    \draw[bidline,rounded corners=2pt]     (-0.5-\barw,0) rectangle (-0.5+\barw,3.2);
    \fill[bidcol,rounded corners=2pt,fill opacity=0.50] (0.5-\barw,0) rectangle (0.5+\barw,0.8);
    \draw[bidline,rounded corners=2pt]     (0.5-\barw,0) rectangle (0.5+\barw,0.8);

    \fill[askcol,rounded corners=2pt,fill opacity=0.50] (1.5-\barw,0) rectangle (1.5+\barw,1.3);
    \draw[askline,rounded corners=2pt]     (1.5-\barw,0) rectangle (1.5+\barw,1.3);
    \fill[askcol,rounded corners=2pt,fill opacity=0.50] (2.5-\barw,0) rectangle (2.5+\barw,3.5);
    \draw[askline,rounded corners=2pt]     (2.5-\barw,0) rectangle (2.5+\barw,3.5);
    \fill[askcol,rounded corners=2pt,fill opacity=0.50] (3.5-\barw,0) rectangle (3.5+\barw,4.9);
    \draw[askline,rounded corners=2pt]     (3.5-\barw,0) rectangle (3.5+\barw,4.9);

    \foreach \x in {-2.5, -1.5,-0.5,0.5,1.5,2.5,3.5}
      \draw[axiscol,line width=0.5pt] (\x,0) -- (\x,-0.35);

    \node[below=8pt,font=\footnotesize] at (-2.5,0) {0.98};
    \node[below=8pt,font=\footnotesize] at (-1.5,0) {0.99};
    \node[below=8pt,font=\footnotesize] at (-0.5,0) {1.00};
    \node[below=8pt,font=\footnotesize] at ( 0.5,0) {1.01};
    \node[below=8pt,font=\footnotesize] at ( 1.5,0) {1.02};
    \node[below=8pt,font=\footnotesize] at ( 2.5,0) {1.03};
    \node[below=8pt,font=\footnotesize] at ( 3.5,0) {1.04};
  \end{tikzpicture}
};

\node[anchor=south] (F2) at (0,\Y) {%
  \begin{tikzpicture}[x=1.15cm,y=0.5cm,>=Latex,thick,scale=\leafscale]
    \definecolor{mplC0}{HTML}{1F77B4} 
    \definecolor{mplC3}{HTML}{D62728} 
    \colorlet{bidcol}{mplC0}
    \colorlet{askcol}{mplC3}
    \colorlet{bidline}{mplC0!60!black}
    \colorlet{askline}{mplC3!60!black}
    \colorlet{axiscol}{black!70}

    \def\barw{0.42}

    \draw[axiscol,line width=0.9pt] (-3.2,0) -- (4.2,0);
    \draw[axiscol,line width=0.9pt,-{Latex[length=3mm]}] (4.2,0) -- +(0.3,0);

    \draw[axiscol,line width=0.9pt,-{Latex[length=2mm]}] (0.5,6) -- (0.5,0);
    \node[fill=white,draw,rounded corners=2pt,inner sep=2pt,font=\footnotesize]
         at (0.5,6.55) {$p_{\mathrm{mid}}\!=\!1.01$};
    \draw[axiscol,line width=0.9pt,-{Latex[length=2mm]}] (1.0,5) -- (1.0,0);
    \node[fill=white,draw,rounded corners=2pt,inner sep=2pt,font=\footnotesize]
         at (1.0,5.10) {$p_{\mathrm{ref}}\!=\!1.015$};

    \fill[bidcol,rounded corners=2pt,fill opacity=0.50] (-1.5-\barw,0) rectangle (-1.5+\barw,5.2);
    \draw[bidline,rounded corners=2pt]     (-1.5-\barw,0) rectangle (-1.5+\barw,5.2);
    \fill[bidcol,rounded corners=2pt,fill opacity=0.50] (-0.5-\barw,0) rectangle (-0.5+\barw,3.2);
    \draw[bidline,rounded corners=2pt]     (-0.5-\barw,0) rectangle (-0.5+\barw,3.2);

    \fill[askcol,rounded corners=2pt,fill opacity=0.50] (1.5-\barw,0) rectangle (1.5+\barw,2.2);
    \draw[askline,rounded corners=2pt]     (1.5-\barw,0) rectangle (1.5+\barw,2.2);
    \fill[askcol,rounded corners=2pt,fill opacity=0.50] (2.5-\barw,0) rectangle (2.5+\barw,4.6);
    \draw[askline,rounded corners=2pt]     (2.5-\barw,0) rectangle (2.5+\barw,4.6);
    \fill[askcol,rounded corners=2pt,fill opacity=0.50] (3.5-\barw,0) rectangle (3.5+\barw,3.8);
    \draw[askline,rounded corners=2pt]     (3.5-\barw,0) rectangle (3.5+\barw,3.8);

    \foreach \x in {-2.5, -1.5,-0.5,0.5,1.5,2.5,3.5}
      \draw[axiscol,line width=0.5pt] (\x,0) -- (\x,-0.35);

    \node[below=8pt,font=\footnotesize] at (-2.5,0) {0.98};
    \node[below=8pt,font=\footnotesize] at (-1.5,0) {0.99};
    \node[below=8pt,font=\footnotesize] at (-0.5,0) {1.00};
    \node[below=8pt,font=\footnotesize] at ( 0.5,0) {1.01};
    \node[below=8pt,font=\footnotesize] at ( 1.5,0) {1.02};
    \node[below=8pt,font=\footnotesize] at ( 2.5,0) {1.03};
    \node[below=8pt,font=\footnotesize] at ( 3.5,0) {1.04};
  \end{tikzpicture}
};

\node[anchor=south] (F4) at (\DX,\Y) {%
  \begin{tikzpicture}[x=1.15cm,y=0.5cm,>=Latex,thick,scale=\leafscale]
    \definecolor{mplC0}{HTML}{1F77B4} 
    \definecolor{mplC3}{HTML}{D62728} 
    \colorlet{bidcol}{mplC0}
    \colorlet{askcol}{mplC3}
    \colorlet{bidline}{mplC0!60!black}
    \colorlet{askline}{mplC3!60!black}
    \colorlet{axiscol}{black!70}

    \def\barw{0.42}
    \def\bidheights{4.0,5.2,3.2}

    \draw[axiscol,line width=0.9pt] (-3.2,0) -- (4.2,0);
    \draw[axiscol,line width=0.9pt,-{Latex[length=3mm]}] (4.2,0) -- +(0.3,0);

    \draw[axiscol,line width=0.9pt,-{Latex[length=2mm]}] (0,6) -- (0,0);
    \node[fill=white,draw,rounded corners=2pt,inner sep=2pt,font=\footnotesize]
         at (0,6.55) {$p_{\mathrm{ref}}\!=\!1.005$};
    \draw[axiscol,line width=0.9pt,-{Latex[length=2mm]}] (0.5,6) -- (0.5,0);
    \node[fill=white,draw,rounded corners=2pt,inner sep=2pt,font=\footnotesize]
         at (0.5,5.10) {$p_{\mathrm{mid}}\!=\!1.01$};

    \foreach \h [count=\i] in \bidheights {
      \pgfmathsetmacro\x{-3.5+\i}
      \fill[bidcol,rounded corners=2pt,fill opacity=0.50] (\x-\barw,0) rectangle (\x+\barw,\h);
      \draw[bidline,rounded corners=2pt] (\x-\barw,0) rectangle (\x+\barw,\h);
    }
    \fill[askcol,rounded corners=2pt,fill opacity=0.50] (1.5-\barw,0) rectangle (1.5+\barw,2.2);
    \draw[askline,rounded corners=2pt]     (1.5-\barw,0) rectangle (1.5+\barw,2.2);
    \fill[askcol,rounded corners=2pt,fill opacity=0.50] (2.5-\barw,0) rectangle (2.5+\barw,4.6);
    \draw[askline,rounded corners=2pt]     (2.5-\barw,0) rectangle (2.5+\barw,4.6);

    \foreach \x in {-2.5,-1.5,-0.5,0.5,1.5,2.5, 3.5}
      \draw[axiscol,line width=0.5pt] (\x,0) -- (\x,-0.35);

    \node[below=8pt,font=\footnotesize] at (-2.5,0) {0.98};
    \node[below=8pt,font=\footnotesize] at (-1.5,0) {0.99};
    \node[below=8pt,font=\footnotesize] at (-0.5,0) {1.00};
    \node[below=8pt,font=\footnotesize] at ( 0.5,0) {1.01};
    \node[below=8pt,font=\footnotesize] at ( 1.5,0) {1.02};
    \node[below=8pt,font=\footnotesize] at ( 2.5,0) {1.03};
    \node[below=8pt,font=\footnotesize] at ( 3.5,0) {1.04};
  \end{tikzpicture}
};

\draw[edge] (F1.south) -- node[pos=0.55,above left] {$\theta$} (SplitL);
\draw[edge] (F1.south) -- node[pos=0.45,above right] {$1-\theta$} (F4.north);
\draw[edge] (SplitL) -- node[pos=0.45,above left] {$\theta^{\mathrm{reinit}}$} (F3.north);
\draw[edge] (SplitL) -- node[pos=0.55,above right] {$1-\theta^{\mathrm{reinit}}$} (F2.north);

\end{tikzpicture}
\vspace{-0.76cm}
\caption{QRM response to consuming the best ask $q_1$ at time $t$ with tick size $\delta=0.01$. The root node shows a typical LOB state just before the trade at time $t^-$, with volumes drawn from the invariant distribution, while the leaf nodes depict the possible post-trade configurations generated by the QRM dynamics at time $t$.}
\label{fig:lob_tree}
\end{figure}
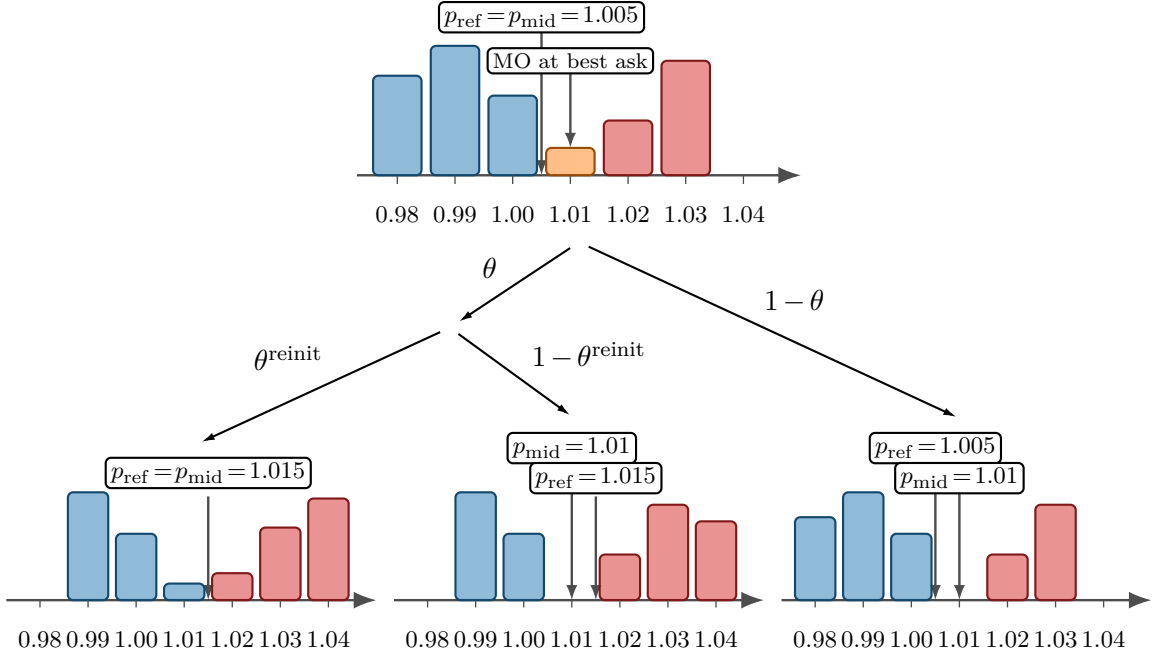
In \cite{huang_2015}, the authors of  interpret parameter $\theta^{\mathrm{reinit}}$ as quantifying the proportion of price changes associated with exogenous information shocks. Following such shocks, market participants typically rebalance their order flows around the new reference price almost instantaneously. The empirical calibration of \cite{huang_2015} using France Télécom (Euronext Paris) data from January $2010$ to March $2012$, yields $\theta\!=\!0.7$ and $\theta^{\mathrm{reinit}}\!=\!0.85$,
 implying that most price adjustments are accompanied by rapid order-book reconfigurations. Although this value may seem high, since volatility models generally attribute around $80\%$ of price variance to endogenous, self-referential mechanisms (see for example \cite{trades_bouchaud_2018}), it can alternatively be understood as the probability that the order book refills following a price movement driven either by market makers or by algorithmic liquidity provision. Henceforth, unless otherwise stated, we adopt the same calibration as \cite{huang_2015} and use the same order-flow intensities.
\subsection{Market impact}\label{sec:mi_refill}
In this Section, we explore the joint role of $(\theta, \theta^{\mathrm{reinit}})$ in shaping liquidity dynamics. To this end, we analyze how the QRM responds when an external trader consumes the entire best ask volume $q_1$, as shown in Figure \ref{fig:lob_tree}. This controlled perturbation shows that different parameter combinations give rise to distinct market-impact regimes, where some configurations lead to prices mean-reverting after the trade, corresponding to a transient impact, while others result in prices continuing to rise on average, corresponding to a permanent impact.

 Let $t_{k}^-$ denote the time just before a buy MO at the best ask that consumes all the best ask volume and $p_{k}$ the associated mid-price after. The index $k$ runs over all the events, and not only the MOs. Furthermore, we assume $p_{\mathrm{ref}}(t_{k}^-)\!=\!p_{\mathrm{mid}}(t_{k}^-)$, with the surrounding volumes $q_{\pm i}$ sampled from the invariant distribution, consistent with the ergodicity of the process $X$ under mild conditions. For simplicity, all queues $q_{\pm i}$ are considered non-empty, as this represents the typical configuration of the invariant distribution $\pi$. When the LOB is redrawn from $\pi$, we assume without loss of generality that all queues are non-empty, as the contribution of empty queues cancels out on average by bid–ask symmetry. The expected mid-price jump after consuming the best ask is 
\begin{align}
    \mathbb{E}[\Delta p_k]
    := \mathbb{E}[p_k - p_{k-1}]
    = (1 + \theta\,\theta^{\mathrm{reinit}})\frac{\delta}{2},
    \label{eq:mid_price_jump}
\end{align}
where 
$\delta$ is the tick size. 
 As expected, $\mathbb{E}[\Delta p_k]$ increases with both $\theta$ and $\theta^{\mathrm{reinit}}$, consistent with the behavior observed in Figure \ref{fig:theta_reinit}. 
\begin{figure}[H]
    \centering
    \includegraphics[width=0.45\textwidth]{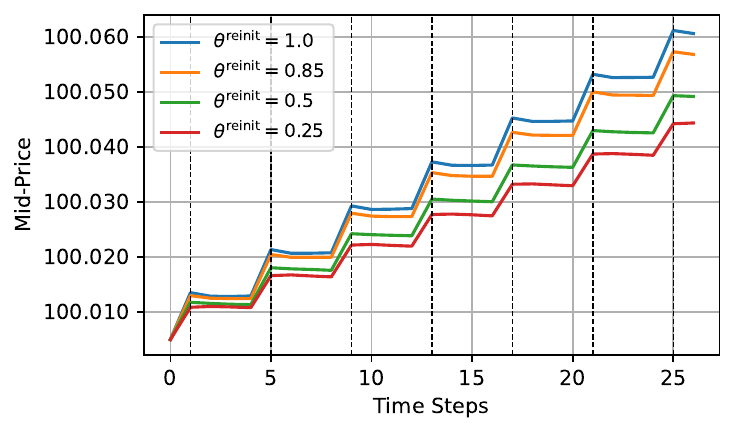}
    \caption{Average mid-price across $20{,}000$ simulations in which a trader systematically buys the entire best ask at fixed time intervals (vertical dashed lines). We set $\theta\!=\!0.7$.}
    \label{fig:theta_reinit}
\end{figure}
Consider now the next event $k\!+\!1$ that updates the LOB. 
With probability $\theta \theta^{\mathrm{reinit}}$, the book is redrawn from its invariant distribution and the mid-price remains unchanged on average. 
With probability $\theta(1\!-\!\theta^{\mathrm{reinit}})$, a bid refill occurs, increasing the price by $\delta/2$; 
conversely, with probability $(1\!-\!\theta)$, an ask refill occurs, decreasing the price by $\delta/2$ and inducing mean reversion. These last two events provide a first-order approximation, as in the large-tick regime the next events are typically bid or ask refills. Averaging over these outcomes yields
\begin{equation}
    \mathbb{E}[\Delta p_{k+1}]
    = \big[\theta(2 - \theta^{\mathrm{reinit}}) - 1\big]\frac{\delta}{2},
    \label{eq:phase_transition}
\end{equation}
which holds on very short time scales only. The sign of $\theta(2 - \theta^{\mathrm{reinit}})\!-\!1$ delineates the post-trade regime: 
a positive value implies that the quote adjustment after the trade is on average in the same direction of the trade, while a negative value indicates a mean reversion of the midprice after the trade.

In order to numerically test this expression, we initialize the LOB by sampling it from the invariant distribution and then we "send" a buy market order that completely depletes the best ask. Here and in the following we set the tick size at $\delta=0.01$. Figure \ref{fig:heatmap_short_term} illustrates the contour plot of the estimated $\mathbb{E}[\Delta p_{k+1}]$ for different values of $\theta, \theta^{\mathrm{reinit}}\!\in\![0.5, 1.0]$. The dashed black line in the figure  denotes the theoretical frontier $\mathbb{E}[\Delta p_{k+1}]\!=\!0$ described by Equation \eqref{eq:phase_transition}, aligning closely with the empirical phase boundary (white region).
\begin{figure}[H]
  \centering
  \begin{minipage}[t]{0.4\textwidth}
    \centering
    \includegraphics[width=\linewidth]{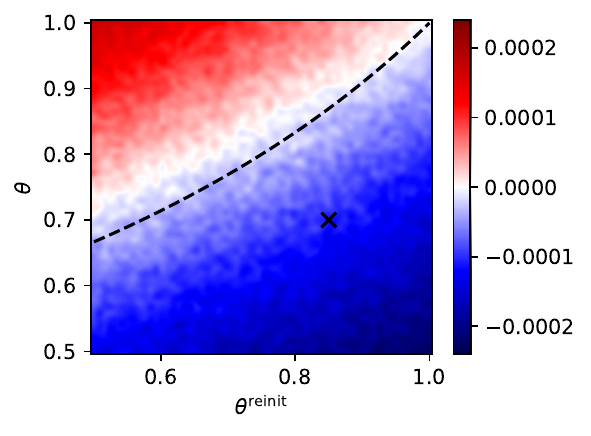}
    \subcaption{Short term.}
    \label{fig:heatmap_short_term}
  \end{minipage}
  \hspace{1.5em}
  \begin{minipage}[t]{0.4\textwidth}
    \centering
    \includegraphics[width=\linewidth]{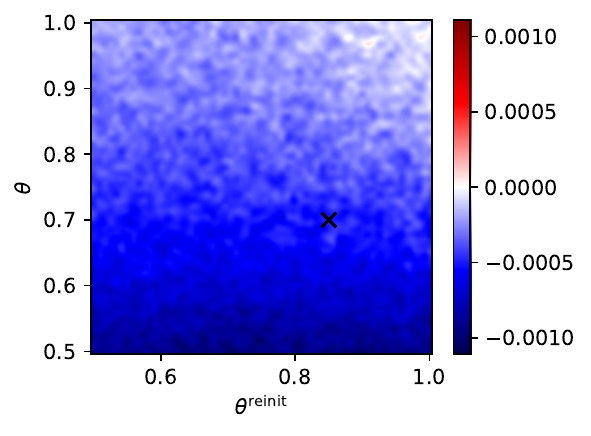}
    \subcaption{Long term.}
    \label{fig:heatmap_long_term}
  \end{minipage}

  \caption{Heatmaps of expected immediate price change $\mathbb{E}[\Delta p_{k+1}]$ (left) and cumulative impact $\mathbb{E}[p_{k+75}\!-\!p_{k}]$ averaged over $20{,}000$ simulations (right) starting from a stationary LOB state with an exogenous trader consuming the best ask. The black cross marks $(\theta, \theta^{\text{reinit}})\!=\!(0.7, 0.85)$.}
  \label{fig:heatmap}
\end{figure}
 
 We now numerically study the long term behavior of the price after an ask depleting market order. The difficulty of course is to establish, for the different parameter configurations, when the asymptotic value of the price has been reached. We first consider a fixed time interval and estimate $\mathbb{E}[p_{k+75}\!-\!p_{k}]$  as a function of $(\theta, \theta^{\text{reinit}})$ (see Figure \ref{fig:heatmap_long_term}). We observe that prices consistently exhibit long-term mean reversion across all parameter configurations. The only exception is the white region in the top-right corner of Figure \ref{fig:heatmap_long_term}, which occurs because for large values of both $\theta$ and $\theta^{\mathrm{reinit}}$ the
 volumes are almost always redrawn from the invariant distribution around the updated reference price. As $\theta$ and $\theta^{\mathrm{reinit}}$ decrease, prices exhibit a stronger mean reversion (blue regions), with the long-term intensity of this effect governed primarily by $\theta$. As shown in Figure \ref{fig:lob_tree}, there are two scenarios after buying the best ask: either the LOB is redrawn from its invariant distribution, or there is a bid-ask refill. When the LOB is redrawn from the invariant distribution, the average mid-price remains constant due to bid–ask symmetry. Thus, this scenario does not contribute to the observed mid-price changes, which are entirely accounted for by the bid-ask refill scenarios (see Appendix \ref{app:refills}).
 
\begin{figure}[H]
  \centering

  \begin{minipage}[t]{0.38\textwidth}
    \centering
    \includegraphics[width=\linewidth]{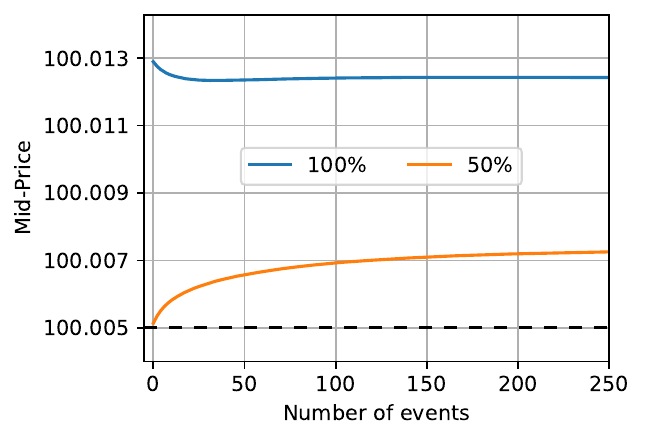}
    \subcaption{$(\theta, \theta^{\mathrm{reinit}})\!=\!(0.7, 0.85)$.}
    \label{fig:combined_market_impact_a}
  \end{minipage}
  \hspace{2em}
  \begin{minipage}[t]{0.38\textwidth}
    \centering
    \includegraphics[width=\linewidth]{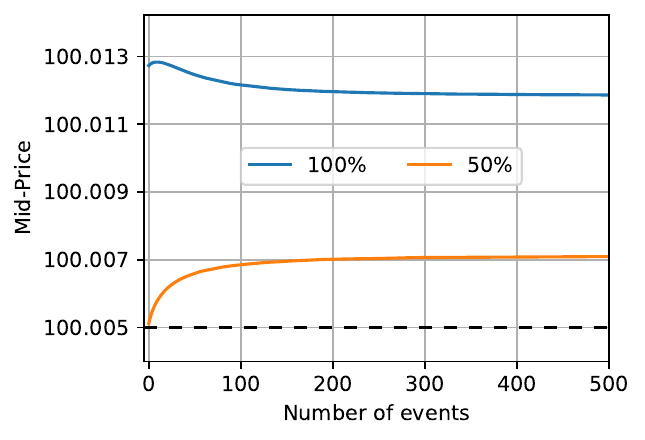}
    \subcaption{$(\theta, \theta^{\mathrm{reinit}})\!=\!(0.9, 0.6)$.}
    \label{fig:combined_market_impact_b}
  \end{minipage}

  \caption{Evolution of $\mathbb{E}[p_k]$ after buying the best ask at $k\!=\!0$, averaged over $1{,}000{,}000$ simulations. The mid-price before the trade is $100.005$ (horizontal dashed line).}
  \label{fig:combined_market_impact}
\end{figure}

To investigate in more detail the price reversion after a trade, we now fix the value of $(\theta, \theta^{\mathrm{reinit}})$ and study the average price dynamics via numerical simulations. As before, we draw the initial state of the LOB from the invariant distribution and then we consider two scenarios: in the first, as above, we send a buy market order of size equal to the best ask, while in the second the size is half of the best ask volume\footnote{Since QRM orders have integer size, we set the market order size to the floor of half the ask volume.}. We make this choice because the RL algorithm used below will have these options (plus the choice of not trading) in the action space. Figure \ref{fig:combined_market_impact} shows the average price trajectories for two configurations of $(\theta, \theta^{\text{reinit}})$. We observe that in both cases, when the market order depletes the best ask, the price displays a clear mean reversion, which can be very slow as in the right top panel. On the contrary, when the size of the market order is half the volume at the best ask (clearly without mechanically moving the price), the price trends in the same direction of the trade. This difference indicates that the impact model associated with the QRM is inherently nonlinear in the trade size, differently from reduced form standard  models used in optimal execution, such as Almgren \& Chriss or the Transient Impact Model\footnote{Interestingly, extensions of the Transient Impact Model with more propagators, such as the one in \cite{Taranto03062018}, are able to reproduce the behavior in Figure \ref{fig:combined_market_impact}. However the closed form solution for the optimal execution problem is not known in this case.} \cite{bouchaud2003fluctuations}. This implies that the optimal trading for the QRM should tactically place market orders of a size which depends, among other things, on the state of the LOB and this makes the problem inherently complex justifying the adoption of deep RL to solve it. Moreover, we expect that the price trajectory after a trade depends also on the absolute volume at the ask and not only on the fraction of it taken by the trade\footnote{We cannot test it directly with this type of simulations because it depends on the correlation between consecutive queue sizes, which is zero when sampling from the invariant distribution. In fact, when the best ask has a small volume, it is likely that the second best ask volume is also small, inducing the possibility of a trending price in the direction of the depleting market order.}. We will add this variable in the state space of the RL algorithm and we will show below that it brings a significant contribution to its performance.
 
\begin{rque}
    From an implementation perspective, the simulator is designed such that when the trader consumes the full best ask, triggering a mid-price change, the QRM reacts identically to a mid-price change generated endogenously within the model, since it does not differentiate between endogenous and exogenous price movements.
\end{rque}

\section{Optimal Execution Setting}\label{sec:mdp}

\subsection{Markov Decision Process Embedding}\label{mdp2}
The execution problem is formulated as a Markov Decision Process (MDP). The MDP is defined as a tuple $\langle\Sspace,\Aspace, \mathbb{P}, \mathcal{R}, \gamma, \mu\rangle$ (see \cite{puterman2014markov}), where $\Sspace$ is the state space, $\Aspace$ the action space, $\mathbb{P}(\cdot|s,a)$ is a Markovian transition model that assigns to each state-action pair $(s,a)$ the probability of reaching the next state $s'$, $\mathcal{R}(s,a)$ is a bounded reward function, $\gamma\!\in\![0,1[$ is the discount factor, and $\mu$ is the distribution of the initial state. 
\paragraph{State space} Each state $s_{\tau_k}$ at time $\tau_k$ for $k\!\in\!\{0,\dots n\}$ is defined as
\begin{align*}
s_{\tau_k}
= \big( 
\, \tau_k,\;
\text{inventory}_{ \tau_k},\;
\text{best ask price}_{\; \tau_k},\;
\text{best bid volume}_{\; \tau_k},\;
\text{best ask volume}_{\; \tau_k} 
\, \big).
\end{align*}
 This constitutes a minimal state representation. The agent is informed of the remaining time, its current inventory, and relevant local market conditions in terms of price and liquidity. In particular, the inclusion of both best bid and best ask volumes allows the agent to infer volume imbalance, a well-established short-term predictor of price movements in market microstructure (see \cite{pulido2024understandingworstkeptsecrethighfrequency}). A state is considered terminal either when the time horizon $T$ is reached or when the agent has fully executed its inventory prior to $T$.

\paragraph{Action space} Since the QRM is designed to model large-tick stocks, the action space is constructed so that the agent may consume at most the best ask volume, as trading beyond the first depth level is prohibitively costly. 
Empirical evidence supports that traders rarely consume more than the first level (see \cite{Pomponio2010}). The action space therefore consists of percentages, where taking an action of, say, $50\%$ at time $\tau_k$ corresponds to buying half of the best available ask at time $\tau_k^-$. 
This formulation enables the RL agent to adjust its trading volume proportionally to prevailing market conditions: executing $50\%$ of the available best when liquidity is high is more profitable than when liquidity is low, while the market impact in both cases remains comparable. 

In the QRM, the state variables $q_i$ at each depth $i$ are expressed in units normalized by the \emph{Average Event Size} $(\text{AES}_i)$, which represents the mean event size across limit, market, and cancel orders observed at level $Q_i$. 
Accordingly, the executable quantities $\Delta x_{\tau_k}$ are integer-valued, $x_{\tau_k}\in \mathbb{N}$. 
In what follows, we focus on a simplified action space,
\begin{align}
\mathcal{A} = \{0\%, 50\%, 100\%\},
\end{align}
which allows the agent either to remain inactive, consume half the entire best ask or the entire best ask, thereby accelerating convergence during training. 

 \paragraph{Reward function} In this work, our goal is to minimize the expected implementation shortfall under the risk-neutral probability measure. The instantaneous reward at time $\tau_k$ is defined as
\begin{align}
r_{ \tau_k} = \Delta x_{\tau_k} (P_0 - P_{ \tau_k})-\alpha \1_{\{\tau_k=T\}}(X_0 - x_T),
\end{align}
where $P_0$ denotes the initial midprice, $\Delta x_{\tau_k}$ the number of shares executed, $\alpha>0$ a positive constant and $P_{\tau_k}$ the execution price, which in our setting is always the best ask. A terminal penalty is applied if the agent fails to fully execute the inventory by the end of the horizon. 
This penalty is equal to the number of shares remaining to be executed, scaled by a final penalty parameter $\alpha$, thereby encouraging complete execution of the $X_0$ shares  before time $T$. For higher risk aversion, the execution horizon $T$ may be shortened to limit price-risk exposure.

\subsection{Reinforcement Learning}\label{sec:rl}
In this paper, we employ Deep Reinforcement Learning to approximate the optimal execution schedule within the QRM model. Specifically, we adopt the Double Deep Q-Network (DDQN) algorithm \cite{vanhasselt_2016}, a model-free, online, off-policy reinforcement learning approach, which provides a model-agnostic solution to the optimal execution problem. 

In this work, we focus on the Double Deep Q-Network (DDQN) algorithm as it provided the best learning results. Algorithm \ref{alg:DDQN_qrm} presents the training procedure for a Double Deep Q-Network (DDQN) agent.
\begin{algorithm}
\caption{DDQN Algorithm in the QRM environment}
\begin{algorithmic}[1]
\Require 
Initialize $Q_{\text{main}}$ (random weights), $Q_{\text{tgt}} \leftarrow Q_{\text{main}}$;\\
Replay memory (size $L$), $\epsilon=1$, batch size $b$, episodes $M$, $c<1$;\\
Set QRM parameters and market state $S_0$, inventory $q_0$.
\For{$i=1$ to $M$} \label{alg:DDQN_outer_loop}
    \State Initialize the QRM simulation \label{alg:DDQN_reset}
    \For{$t=1$ to $N$} \label{alg:DDQN_timestep_loop}
        \State $s_t \gets QRM(t)$ \label{alg:DDQN_state}
        \State $a_t \gets 
        \begin{cases}
            \text{random action} & \text{w.p. } \epsilon\\
            \arg\max\limits_{a} Q_{\text{main}}(s_t,a|\theta_{\text{main}}) & \text{w.p. } 1-\epsilon
        \end{cases}$ \label{alg:DDQN_eps_greedy}
        \State Execute $a_t$ in QRM $\Rightarrow$ new state $s_t$, observe reward $r_t$ \label{alg:DDQN_reward}
        \State Store $(s_t, r_t, a_t, s_{t+1})$ in memory \label{alg:DDQN_store}
        \If{$|\text{Memory}| \ge b$} \label{alg:DDQN_train_cond}
            \State Sample $(s^j_t,r^j_t,a^j_t,s^j_{t+1})_{j=1}^b$ \label{alg:DDQN_sample}
            \State $a^{*,j} = \arg\max\limits_a Q_{\text{main}}(s^j_{t+1},a|\theta_{\text{main}})$ \label{alg:DDQN_vstar}
            \State $y^j = r^j_t + \gamma Q_{\text{tgt}}(s^j_{t+1}, a^{*,j}|\theta_{\text{tgt}})$ \label{alg:DDQN_target}
            \State Update $\theta_{\text{main}}$ minimizing 
            $\mathcal{L} = \frac{1}{b}\sum_j (y^j - Q_{\text{main}}(s^j_t,a^j_t|\theta_{\text{main}}))^2$ \label{alg:DDQN_loss}
        \EndIf
        \If{$t \bmod m = 0$} \label{alg:DDQN_sync_cond}
            \State $\epsilon \leftarrow \epsilon - c$, \quad $\theta_{\text{tgt}} \leftarrow \theta_{\text{main}}$ \label{alg:DDQN_sync}
        \EndIf
    \EndFor
\EndFor
\end{algorithmic}
\label{alg:DDQN_qrm}
\end{algorithm}
Two neural networks are initialized: the main network $Q_{\text{main}}$ for action selection, and the target network $Q_{\text{tgt}}$ for value estimation. At each time step, the agent observes the current state and selects an action via an $\epsilon$-greedy policy that balances exploration and exploitation. The resulting transition and reward are stored in a replay buffer. Once the buffer reaches a sufficient size, mini-batches are sampled to update $Q_{\text{main}}$ by minimizing the Bellman loss using target values computed from $Q_{\text{tgt}}$. Every $m$ steps, the target network is synchronized with the main network and the exploration rate $\epsilon$ is decayed. This iterative process allows the agent to approximate the optimal action–value function while controlling value overestimation. After training, $Q_{\text{main}}$ serves as the agent’s policy for decision-making in the environment. 

We consider a finite-horizon setting with exponentially discounted future rewards by the factor $\gamma$. More specifically, the per-step reward entering the Bellman recursion is
$$r_k := r(s_{\tau_k},a_{\tau_k}) = a_{\tau_k}\,(P_0 - P_{\tau_k}) 
- \alpha\,\mathbf{1}_{\{\tau_k = T\}}\,(X_0 - x_T).$$
A trajectory is a sequence of states, actions, and rewards up to a stopping time $\tau$, i.e.,
\begin{equation*}
(s_{0}, a_{0}, r_{1}, s_{1}, a_{1}, r_{2}, ..., s_{\tau-1}, a_{\tau-1}, r_{\tau}).
\end{equation*}
Given a policy $\mathcal{\pi}$ we can define the \textit{State-Action Value Function} 
\begin{equation}
Q_\mathcal{\pi}(s, a) = \mathbb{E}_{\pi}[\sum_{i=1}^{\tau} \gamma^{i-1} r_{i} | s_0 = s, a_0 = a],
\end{equation}
which represents the expected return from state $s$ if we take action $a$ and then we follow the policy $\pi$ and can be recursively defined by the following Bellman equation \cite{bellman1966dynamic}, 
\begin{equation}
        Q_\pi(s,a) = r(s,a) + \gamma \mathbb{E}_{\substack{s'\sim \mathcal{P}(\cdot|s,a)\\a'\sim\pi(\cdot|s')}}\big[Q_\pi(s',a')\big].
        \label{eq:bellman}
\end{equation}
Solving the MDP means finding the \textit{optimal} policy $\pi^*$ which is the policy that maximizes the objective
\begin{equation*}
    J_\pi := \underset{s_0\sim \mu}{\mathbb{E}_{\pi}}\Big[\sum_{i=1}^{\tau} \gamma^{i-1} r_{i}\Big].
\end{equation*}

\section{Experiments and results}\label{sec:results}

In this section we present the results of our numerical investigations. The aim of the experiments is to study whether the DDQN agent is able to find robust optimal execution strategies without any form of knowledge of the underlying impact model. We focus here on the best-performing configuration, while additional experiments exploring alternative state and action spaces are reported in Appendix \ref{app:reduced_state_spaces} for comparison.

\subsection{Benchmark Strategies} \label{sec:benchmark_strats}
To evaluate the DQN policy, we introduce a set of benchmark execution strategies designed to minimize Implementation Shortfall (IS) and provide a basis for performance comparison.
\paragraph{Baseline Model}
We consider the Time-Weighted Average Price (TWAP) benchmark, in which the trader executes $X_0$ number of shares uniformly over a fixed horizon $[0, T]$ divided into N discrete intervals. The trading rate is constant, given by
$\Delta x^* = \left( \frac{X_0}{N}, \dots, \frac{X_0}{N} \right)$,
or equivalently, the number of executed shares after time step $\tau_k$ is equal to $x^*_{\tau_k} = k\frac{X_0}{N}$, for $k = 0, \dots, N$.
For a risk-neutral trader, this policy coincides with the Almgren–Chriss (A\&C) solution \cite{almgren}, whose objective is to minimize the expected Implementation Shortfall (IS). Under the standard A\&C assumptions of linear permanent and temporary market impact, and asset price dynamics following a Brownian motion with constant volatility, the TWAP is the optimal strategy. 

 \paragraph{The Percentage of Posted Volume Benchmark} We introduce a family of new benchmarks we call Percentage Of Posted Volume (POPV). More precisely, we define $\text{POPV}_i$ as the strategy that purchases a fixed fraction (50\% or 100\%) of the available volume at the best ask every $i$ time steps, while remaining inactive otherwise. The intuition behind this strategy is to exploit pauses in execution during which prices tend to mean revert (see Figure \ref{fig:theta_reinit}). 
\subsection{Training Configuration}
\paragraph{Algorithm parametrization} In our implementation, we employ fully connected feed-forward neural networks comprising 5 layers, each with $30$ hidden units and leaky-ReLU activation functions. We use the ADAM optimizer for optimization. The RL agent is trained on approximately $500,000$ episodes. The epsilon-greedy exploration policy starts at $1.0$ and decays linearly to $0.01$ over the first $3\%$ of training. We set the final penalty $\alpha\!=\!1.0$. The parameters used to calibrate the algorithm are reported in Table \ref{tab:pmts}. The parameters not shown in the table change depending on the experiments and are reported below accordingly.

\begin{table}[H]
\centering
\begin{tabularx}{0.75\textwidth}{@{}l c l c@{}}
\toprule
\textbf{DDQN Parameters} & & \textbf{Model Parameters} & \\ 
\midrule
NN layers                & 5           & Time horizon (T)     & 600~s \\
Hidden nodes             & 30          & Time intervals ($N$) & 25 \\
ADAM lr                 & 1e-4         & Shares to execute ($X_0$)  & 25 \\
Batch size ($b$)              & 1,024          & Final Penalty ($\alpha$) & 1.0 \\
Replay memory ($L$)           & 1e6      & $\theta$ & 0.7 \\
Target update ($m$)         & 1e3      &  $\theta^{\mathrm{reinit}}$ & 0.85 \\
Training episodes ($M$)       & 5e5      &  &  \\
Test episodes ($B$)           & 2e4       & & \\
Discount factor ($\gamma$)     & 0.995    &  & \\
\bottomrule
\end{tabularx}
\vspace{0.3ex}
\begin{minipage}{0.75\textwidth}
\footnotesize \emph{Note.} The target update $m$ is in number of environment steps (not episodes).
\end{minipage}
\caption[b]{Fixed parameters used in the DDQN algorithm.}
\label{tab:pmts}
\end{table}

\paragraph{Feature Normalization}  As the learning model relies on neural networks, all input features are normalized. Time and inventory are linearly rescaled to the interval $[-1, 1]$, while prices and volumes are standardized using z-score normalization. 
 
\subsection{Simulation under Market-Calibrated Dynamics} 
\label{sec:results_2}

\paragraph*{Environment Setup.} In this section, we adopt the calibration proposed in \cite{huang_2015}, with parameters $\theta\!=\!0.7$ and $\theta^{\text{reinit}}\!=\!0.85$, and use identical order-flow intensities. These parameters were calibrated on France Télécom (Euronext Paris) using data from January $2010$ to March $2012$, where the average bid–ask spread was approximately $1.43$ ticks. The agent wants to buy $25$ shares and may remain inactive or purchase half or the entire best ask volume at each decision point, i.e., $\mathcal{A}\!=\!\{0\%, 50\%, 100\%\}$. We simulate $600$ seconds of the QRM with a trader time step of $25$ seconds, meaning the agent can act at $\tau_0\!=\!0$ and subsequently every $25$ seconds. Each of these $25$ second interval is called {\it Trader Step}. This combination of parameters is designed to balance execution urgency with tactical flexibility. The number of shares to execute is large enough to require multiple market interventions, yet not so large relative to the time horizon that the optimal policy collapses into systematic buying at every step. This setup allows us to assess whether the RL agent can learn to wait for favorable market conditions and adapt its trading behavior accordingly. To compare the description in events given above with the one in seconds used here, it is useful to remark that there are on average $7$ events per second .

We benchmark the RL agent against the strategies introduced in Section \ref{sec:benchmark_strats}: TWAP, POPV1, POPV2, POPV3 and POPV4. For a fair comparison, POPV1 and POPV2 take action $50\%$ while POPV3 and POPV4 take action $100\%$. Executing 50\% of the posted volume results in a higher Implementation Shortfall (IS) due to reduced market impact compared to the more aggressive 100\% setting. Moreover, it is necessary to ensure that the entire inventory is executed within the specified time horizon. This constraint justifies using only POPV1 and POPV2 for the $50\%$ action, as the slower execution of POPV3 and POPV4 would prevent full completion. Conversely, in the $100\%$ action, POPV1 and POPV2 become overly aggressive and yield worse IS, motivating the focus on POPV3 and POPV4. To assess statistical significance, we performed a one-sided Welch’s t-test to determine whether the best-performing strategy has a significantly higher average IS than the second-best. Statistical significance is indicated by asterisks: (*) for $p < 0.05$, (**) for $p < 0.01$, and (***) for $p < 0.001$. To ensure comparability across methods, when the agent fails to fully execute its inventory, a final trade is executed at an additional time step and its cost is included in the IS. 

\paragraph*{Learning Dynamics and Q-Value Analysis.} We consider the $5$-dimensional state space that has been introduced in Section \ref{mdp2}. The learning curve in Figure \ref{fig:learning} shows steady reward improvement with convergence around $50$k episodes, while the TD loss decreases smoothly and stabilizes at low values\footnote{Note that the reward initially decreases. Under our chosen parameter configuration (in particular, the episode length and the magnitude of the admissible actions), a nearly random policy in the early $\epsilon$-greedy phase trades aggressively enough to execute most of the inventory before the time horizon, so the initial reward is relatively high. As $\epsilon$ decreases and the policy becomes more structured, the agent temporarily learns to trade more cautiously, leaving a non-negligible inventory at maturity and thus suffering a larger terminal penalty, which explains the drop in reward. After this transient phase, the agent adjusts its strategy, and the reward increases and eventually converges.}. This indicates that the DDQN agent learned a stable and well-optimized policy throughout the training. 
\begin{figure}[H]
    \centering
    \begin{minipage}[t]{0.46\textwidth}
        \centering
        \includegraphics[width=\linewidth]{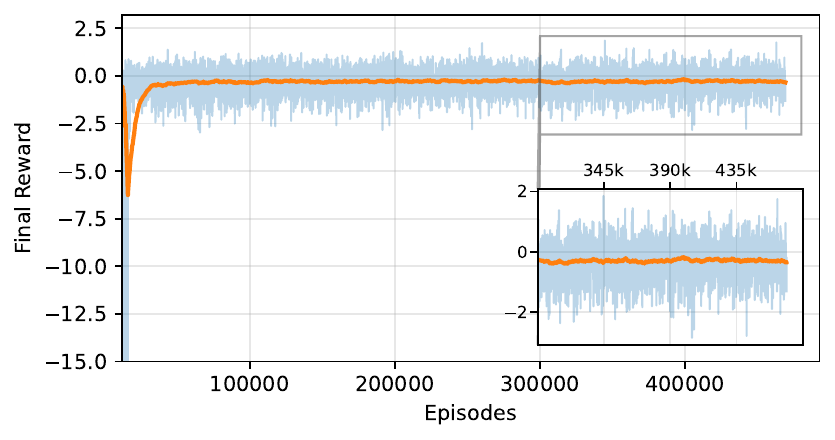}
        \subcaption{Final reward.}
        \label{fig:learning_curve}
    \end{minipage}\hspace{2em}
    \begin{minipage}[t]{0.46\textwidth}
        \centering
        \includegraphics[width=\linewidth]{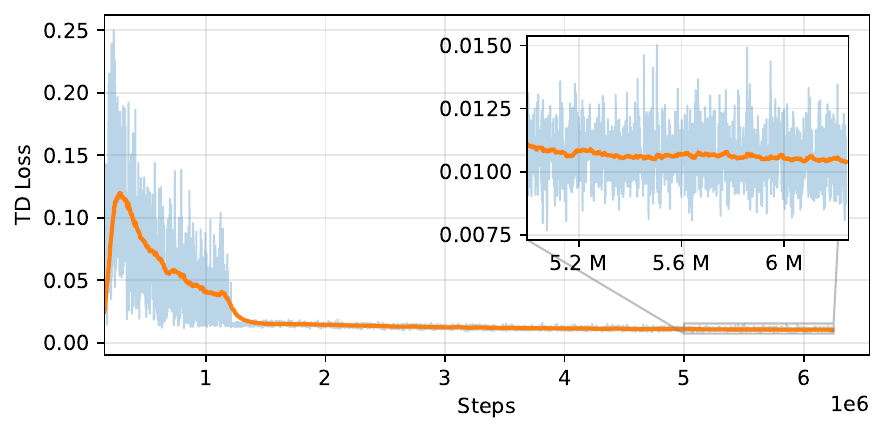}
        \subcaption{Temporal difference loss.}
        \label{fig:td_loss}
    \end{minipage}
    \caption{Learning across $6{,}240{,}000$ environment steps ($\approx 500{,}000$ episodes).}
    \label{fig:learning}
\end{figure}
To assess the quality of the learning, we analyze the Q-values for all actions in Figures \ref{fig:q_values_action0_c4}, \ref{fig:q_values_action1_c4}, and \ref{fig:q_values_action2_c4}. More precisely, we plot the Q-values of the trained RL agent as a function of inventory and time, when the ask price is equal to the arrival price and the bid and ask volumes are equal to their average values.
\begin{figure}[H]
    \centering
    \begin{minipage}[t]{0.38\textwidth}
        \centering
        \includegraphics[width=\linewidth]{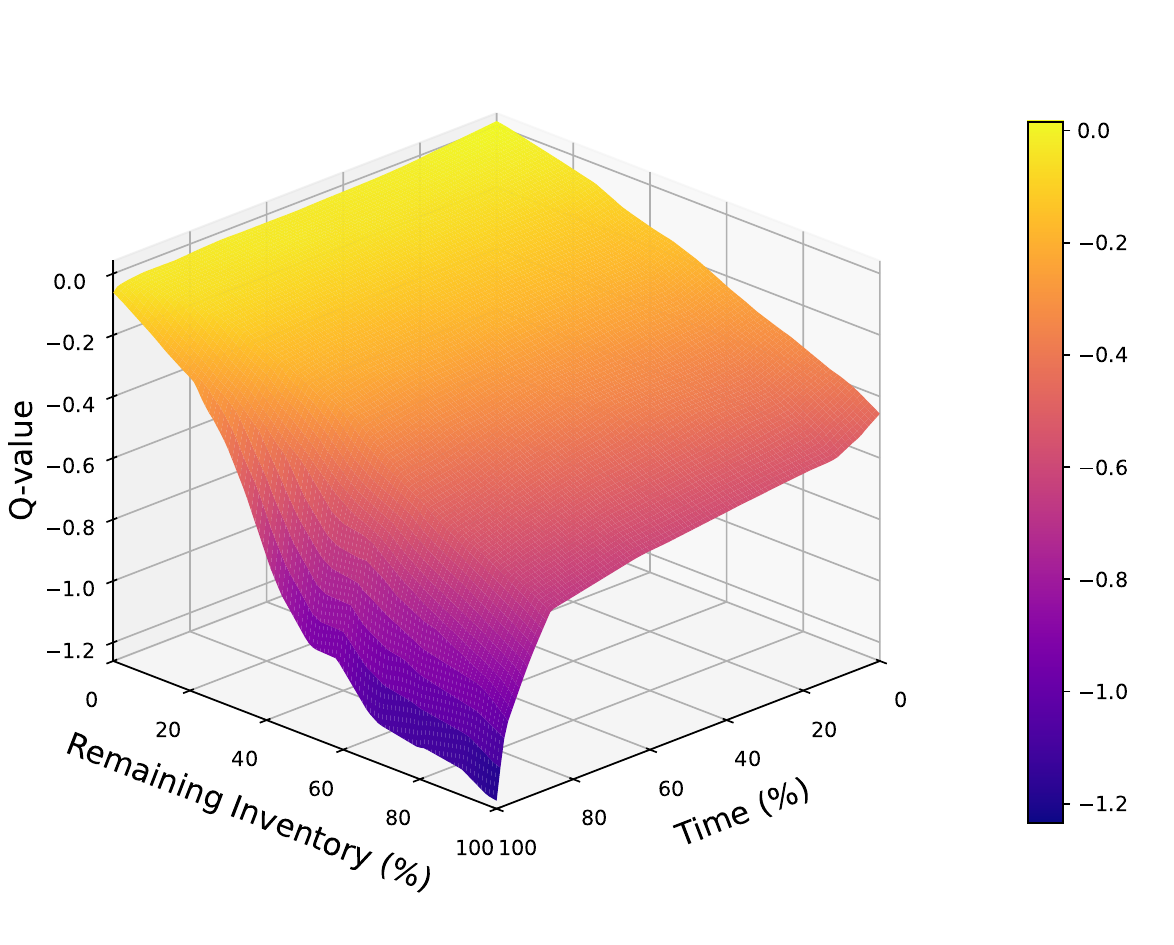}
        \subcaption{Q-values for action 0\%.}
        \label{fig:q_values_action0_c4}
    \end{minipage}\hspace{2.5em}
    \begin{minipage}[t]{0.38\textwidth}
        \centering
        \includegraphics[width=\linewidth]{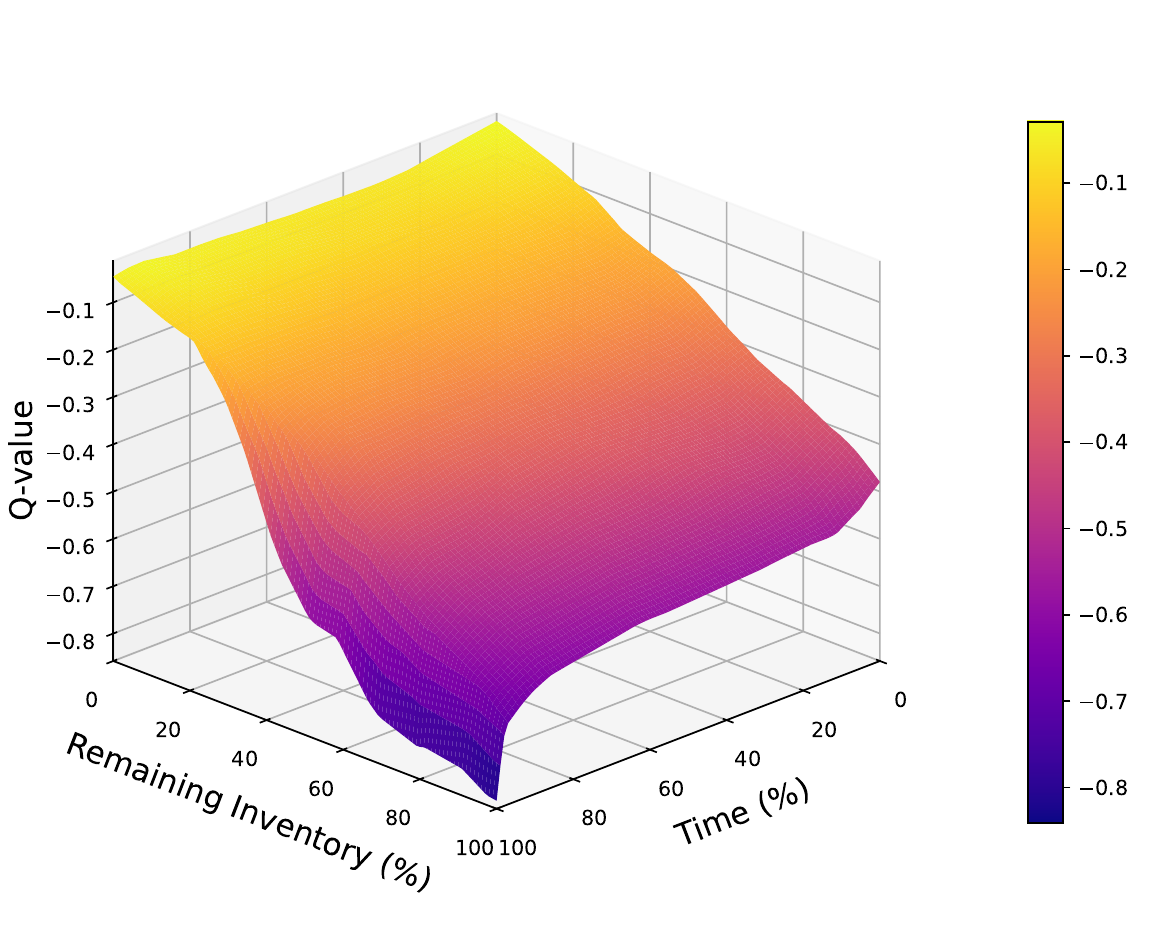}
        \subcaption{Q-values for action 50\%.}
        \label{fig:q_values_action1_c4}
    \end{minipage}
    \hspace{2.5em}
    \begin{minipage}[t]{0.38\textwidth}
        \centering
        \includegraphics[width=\linewidth]{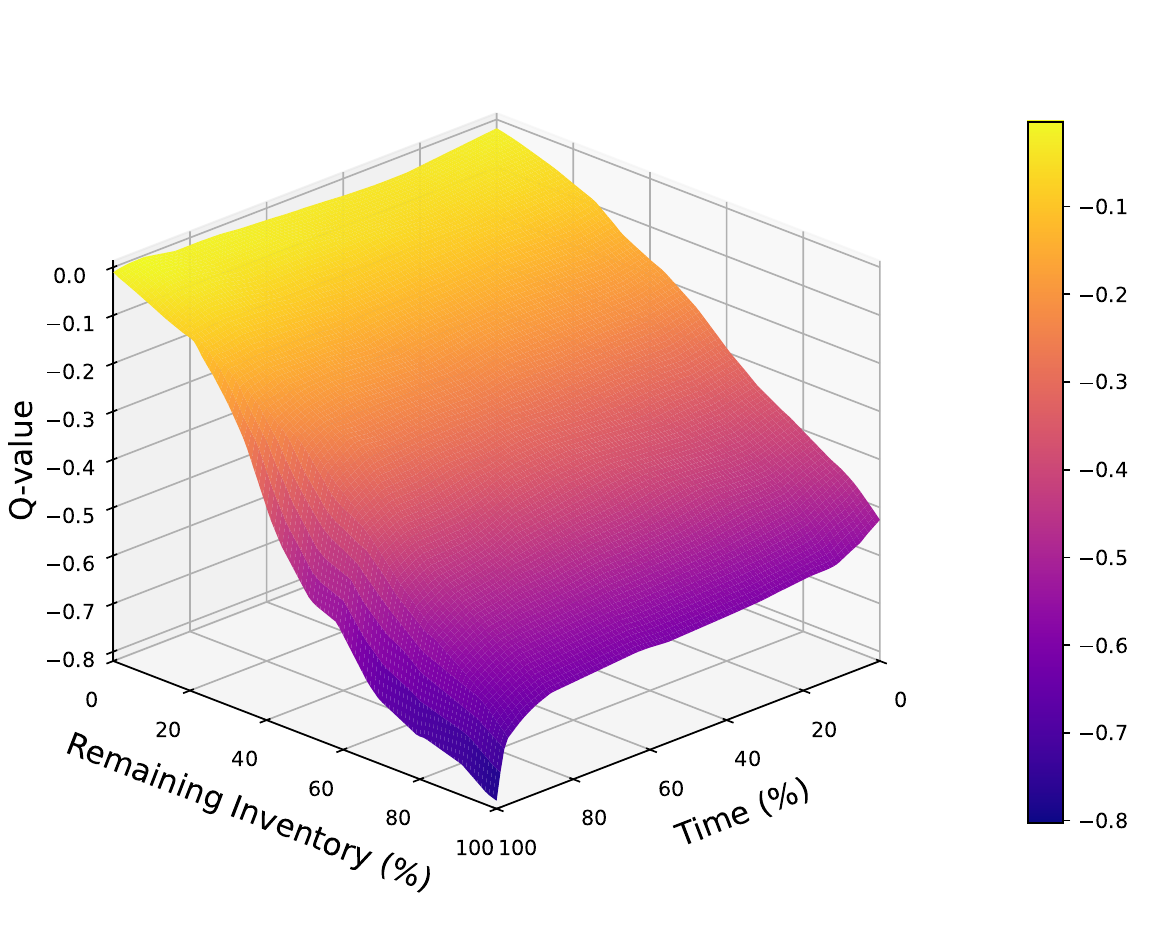}
        \subcaption{Q-values for action 100\%.}
        \label{fig:q_values_action2_c4}
    \end{minipage}
    \caption{Q-value surfaces at $P\!=\!P_0$ with mean bid and ask volumes.}
    \label{fig:q_values_actions_c4}
\end{figure}

All plots reveal that, for a fixed inventory, Q-values decline over time as the remaining horizon shortens, reflecting the agent’s increasing risk of incomplete execution, which is penalized in the reward function. The lowest Q-values are observed at the terminal time $T$, representing a lower bound on the execution cost. Furthermore, execution costs tend to rise with larger inventories, consistent with the greater market impact and urgency associated with executing larger positions. Moreover, execution costs tend to rise with larger inventories. 

\paragraph*{Performance Evaluation Against Baselines.}  
Table \ref{tab:results_c4} reports the performance of the DDQN agent compared to benchmark strategies. Since the reward function is defined as the negative of the IS, minimizing IS is equivalent to maximizing the reward.
\begin{table}[H]
\centering
\begin{tabular}{l|cccccc}
\hline
        & POPV1 & POPV2 & POPV3 & POPV4  & TWAP & DDQN  \\
\hline
Mean    & -0.343  & -0.342  & -0.400 & -0.399  & -0.365  & $\mathbf{-0.259}^{***}$  \\
Std     & \textbf{0.378}  & 0.472  & 0.388  & 0.437 & 0.652  & 0.631  \\
\hline
\end{tabular}
\caption{Reward results on $20{,}000$ test episodes.}
\label{tab:results_c4}
\end{table}
It is clear that the RL agent achieves the best overall performance, with a significantly higher average reward than all benchmarks. As shown in Figure \ref{fig:is_c4}, it matches TWAP’s best-case performance while maintaining limited worst-case losses.
\begin{figure}[H]
    \centering
    \includegraphics[width=0.4\textwidth]{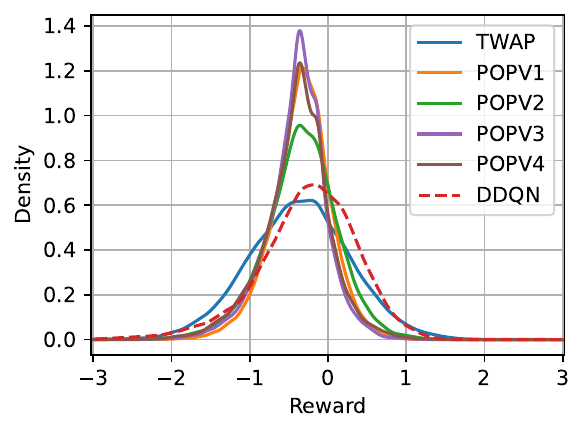}
    \caption{Reward distribution for the different tested strategies. }
    \label{fig:is_c4}
\end{figure}
We checked how often the different strategies are unable to complete the purchase within the time window. We find that the DDQN agent fails to fully complete the purchase in only $0.045\%$ of episodes, with an average of $1.44$ shares remaining. In comparison, POPV2 fails in $0.035\%$ of episodes, leaving an average of $2.14$ shares and POPV4 fails in 0.265\% of episodes, leaving an average of 2.51 shares. Thus, we can safely conclude that the considered strategies almost always complete the trading program.

\paragraph{Trading Patterns and Tactical Adaptation.}
Optimal execution problems are typically addressed using a two-layer framework consisting of strategy and tactic. The strategy component determines the overall trading schedule, namely the number of shares to execute within each time interval (see Figure \ref{fig:traj_inv_c4}). The tactic component, on the other hand, governs how these scheduled orders are executed within each interval (see Figures \ref{fig:gaps_variance_c4}
). The following figures report results averaged over 20,000 test episodes.
 \begin{figure}[H]
    \centering
        \centering
        \includegraphics[width=0.4\linewidth]{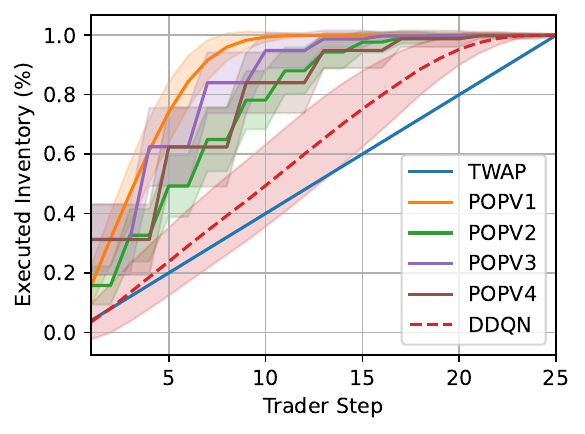}
        \captionof{figure}{Average execution trajectory of the different tested strategies as a function of time (measured in trader step).}
        \label{fig:traj_inv_c4}
    \end{figure}

Figure \ref{fig:traj_inv_c4} displays the average execution trajectory of the RL agent compared with those of TWAP and POPV. The DDQN trajectory remains approximately linear for most of the trading horizon, becoming slightly concave only near completion. This pattern contrasts with the more uniform or front-loaded behavior of the rule-based benchmarks and highlights the dynamic adjustment of execution pace observed in the DDQN runs. 

The figure, however, should be properly interpreted as it might convey the wrong impression that the DDQN strategy is static (as the TWAP or, more generally, the AC solution). First of all we show that the time needed to complete the execution with the DDQN strategy is highly variable. Figure \ref{fig:ep_length_c4} shows the distribution of the metaorder execution time, i.e. the number of trader steps required to complete the execution under the DDQN policy. Short episodes correspond to periods of abundant liquidity at the best ask, whereas longer episodes occur when the agent waits for more favorable trading opportunities. This broad distribution contrasts with the sharply peaked episode length profiles of benchmark strategies (see Figure \ref{fig:ep_length_benchmarks}), underscoring the adaptive nature of the learned policy. Finally, the decline in density a few steps before the time horizon indicates that the RL agent has correctly internalized the terminal penalty, completing execution by the time horizon.

\begin{figure}[H]
        \centering
        \includegraphics[width=0.4\linewidth]{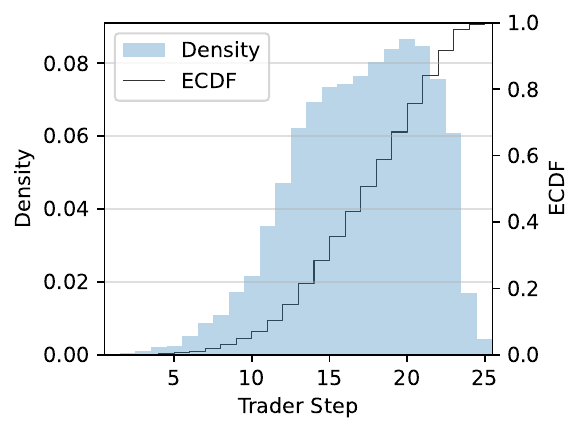}
        \captionof{figure}{Distribution of the metaorder execution time in number of trader steps.}
        \label{fig:ep_length_c4}
\end{figure}

Second, even restricting to episodes with the same metaorder execution length, we observe a large variability in the strategy. To show this, we compute for each metaorder execution time the average and variance of the gaps between consecutive executions, as shown in Figure \ref{fig:gaps_variance_c4} (gaps are measured in time steps: a gap of zero indicates two executions occurred at consecutive time steps).
\begin{figure}[H]
  \centering
  \begin{minipage}[t]{0.36\textwidth}
    \centering
    \includegraphics[width=\linewidth]{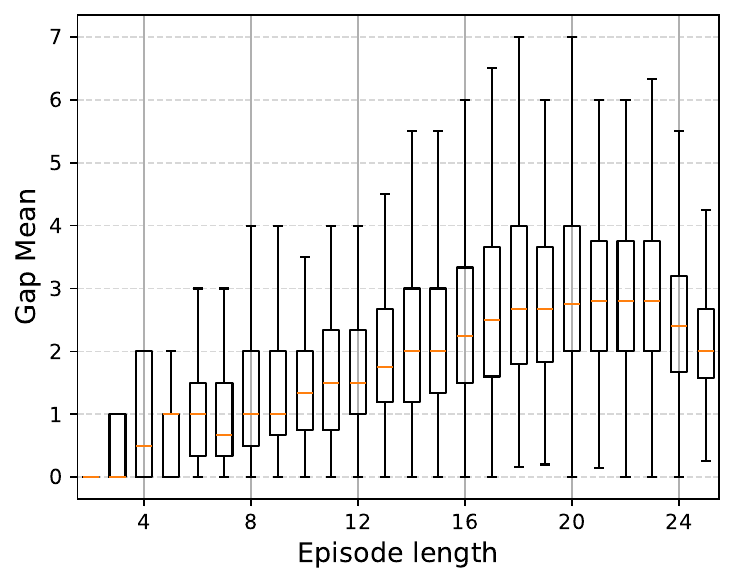}
  \end{minipage}
  \hspace{2em}
  \begin{minipage}[t]{0.36\textwidth}
    \centering
    \includegraphics[width=\linewidth]{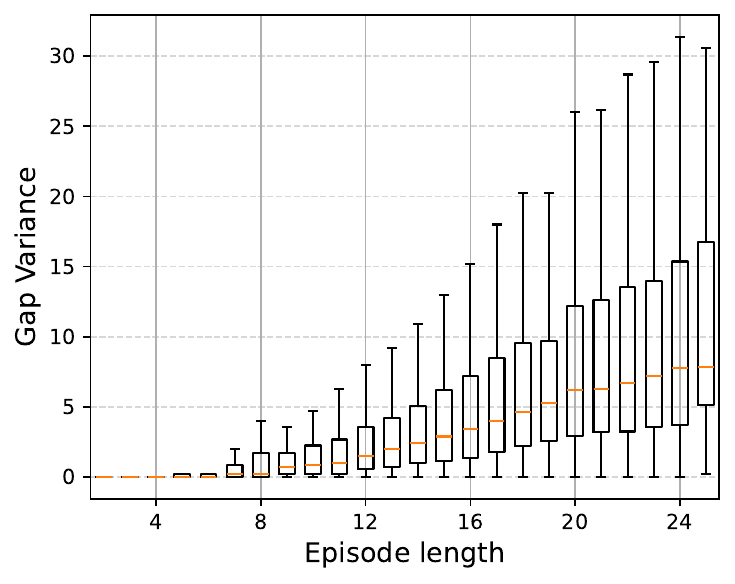}
  \end{minipage}

  \caption{Boxplots of average gaps (left) and gap variance (right) between consecutive executions per episode length for the DDQN strategy.}
  \label{fig:gaps_variance_c4}
\end{figure}
The broad interquartile ranges and long whiskers indicate that the RL agent does not trade at uniform intervals: even for episodes of comparable duration, the spacing between trades varies substantially. This variability becomes more pronounced for longer episodes, suggesting that the agent adjusts its trading frequency dynamically over time rather than following a fixed temporal schedule. Such heterogeneity across episodes of equal length implies that execution timing depends strongly on the prevailing market conditions. The resulting behavior is therefore tactical in nature, as the agent adapts its actions in response to short-term liquidity and price dynamics rather than adhering to a predetermined strategic rhythm.

To conclude, the DDQN trajectory exhibits a distinctly nonlinear, state-dependent profile: execution is accelerated under favorable conditions and decelerated as the horizon shortens. This pattern reflects a learned balance between immediacy and market impact, driven by the temporal structure of Q-value updates and declining continuation value near maturity. The resulting behavior demonstrates that the DDQN policy learns both strategic and tactical dimensions of execution, dynamically adapting to market states beyond the capability of fixed benchmark strategies (see \cite{reinforcement_nevmyvaka_2006}).

\paragraph{Feature Importance.}
To identify the main drivers of the RL agent’s decisions, we perform an input-gradient analysis  and the results are shown in Table \ref{tab:fi_actions_combined_c4} (see also Appendix \ref{app:feat_att}).
\begin{table}[H]
    \centering
    \begin{tabular}{l|c|c|c}
        \hline
        Feature & \makecell{Gradient \\ (Action $0\%$)} & \makecell{Gradient \\ (Action $50\%$)} & \makecell{Gradient \\ (Action $100\%$)} \\
        \hline
        Inventory   & 0.40 & 0.37 & 0.41 \\
        Ask Price   & 0.34 & 0.34 & 0.35 \\
        Time        & 0.10 & 0.07 & 0.06 \\
        Ask Volume  & 0.07 & 0.07 & 0.07 \\
        Bid Volume  & 0.06 & 0.05 & 0.04 \\
        \hline
    \end{tabular}
    \caption{Input-gradient for actions $0\%$, $50\%$ and $100\%$.}
    \label{tab:fi_actions_combined_c4}
\end{table}
The results indicate that the ask price and inventory are the primary determinants of action selection, suggesting that the agent mainly reacts to immediate market conditions. In contrast, the influence of time is small and decreases with action magnitude. The larger the executed volume, the less significant the time feature, reinforcing that the learned policy is tactical and adaptive rather than schedule-driven. These findings are consistent with a complementary SHAP value analysis\footnote{Results are available upon request.}.

\subsection{Robustness to Different Market Conditions}

To test robustness, the RL policy trained at $(\theta, \theta^{\mathrm{reinit}})\!=\!(0.7, 0.85)$ is evaluated across QRM simulations with $\theta, \theta^{\mathrm{reinit}} \in [0.5, 1.0]$. Figure \ref{fig:robustness} shows that it consistently exceeds the best benchmark, achieving up to 27\% higher performance. This demonstrates that the learned policy generalizes well to different market regimes and can be relied upon to maintain strong performance under varying conditions.
\begin{figure}[H]
    \centering
    \includegraphics[width=0.42\textwidth]{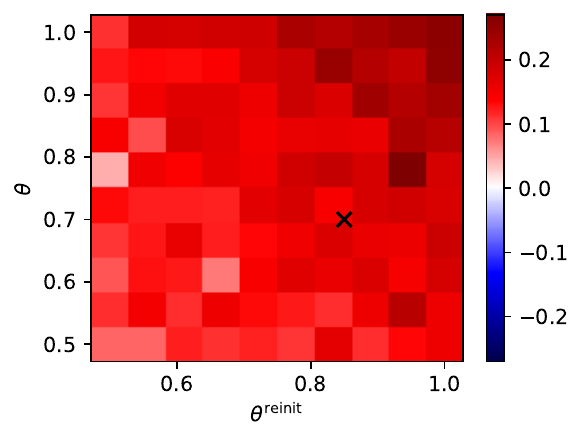}
    \caption{Heatmap of the relative difference between the average reward of the RL agent vs the best benchmark (TWAP in all the cases), averaged over 10,000 simulations. }
    \label{fig:robustness}
\end{figure}
\section{Conclusion}
In this work, we proposed a reinforcement-learning framework for optimal execution within the Queue-Reactive Model (QRM). By leveraging the ergodicity and microstructural features of the QRM, including its state-dependent order-flow intensities, its stochastic queue dynamics, and its endogenous liquidity replenishment mechanisms, we produced a simulation environment that captures both direct liquidity consumption and indirect order-flow responses. Training a Double Deep Q-Network (DDQN) in this setting shows that the RL agent learns execution strategies that are tactically adaptive and strategically robust. Across all configurations, the agent consistently outperforms standard benchmarks such as TWAP, adjusting to local liquidity and short-term price pressure rather than following a fixed schedule. Analyses of execution timing, Q-values, and feature importance confirm that the policy internalizes subtle microstructural patterns, including nonlinear impact and volume-imbalance effects.

Possible extensions include adding limit-order placement and queue-management decisions to balance passive and aggressive execution. Extending the methodology to multi-asset execution problems would introduce cross-impact effects and portfolio-level constraints, broadening the scope of the approach. Another potential direction is to integrate predictive alpha signals into the state space so that the agent jointly optimizes execution and directional positioning, thereby connecting optimal trading and optimal execution.

\section*{Acknowledgements} 
Yadh Hafsi acknowledges the support of the Chaire Risque Financiers, Société Générale, at École Polytechnique. 

\printbibliography

\newpage
\appendix

\section{Feature Importance}
\label{app:feat_att}

We draw $N$ transitions $(s,a)$ from the replay buffer and compute the corresponding Q-values $Q(s,a)$. For each input feature $s_i$, we evaluate the gradient
$g_i = \frac{\partial Q(s,a)}{\partial s_i},$
take its absolute value, and average across all $N$ sampled transitions. The resulting per-feature scores quantify how variations in each state variable influence the Q-value. This gradient-based method requires only a single backward pass per state–action pair.

\section{Additional Plots}
\label{app:additional_plots}

\subsection{Invariant Distribution}
\begin{figure}[H]
    \centering
    \includegraphics[width=0.34\textwidth]{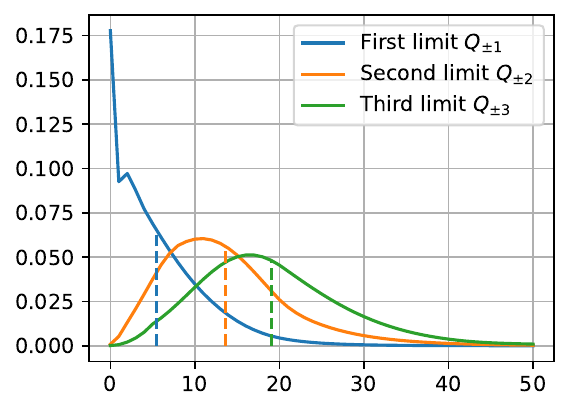}
    \caption{Invariant distribution of $Q_{\pm 1}, Q_{\pm 2}, Q_{\pm 3}$, taken from \cite{huang_2015}.}
    \label{fig:inv_distrib}
\end{figure}  

\subsection{Market Impact}\label{app:refills}
 We provide additional details on the bid and ask refill scenarios mentioned in Section \ref{sec:mi_refill} to clarify the mechanism behind the price mean reversion observed after sending a market order that consumes the entire best ask. In a bid-refill scenario (see Figure \ref{fig:combined_market_impact_c}), price initially increases because the most probable subsequent event after buying the best ask is a limit order placement at the best bid, which raises the mid-price from $100.010$ to $100.015$. Logically, the same phenomenon happens when $\theta\!=\!0$ and we observe that the trajectories coincide over short horizons. However, simulations indicate that this new liquidity typically vanishes rapidly, causing the mid-price to return to its previous level. When the reference price then decreases to $100.005$ (with probability $\theta$) and volumes are redrawn from the invariant distribution (with probability $\theta\theta^{\mathrm{reinit}}$), the average mid-price settles half a tick below its initial value. This mechanism occurs frequently enough to produce systematic mean reversion, driven by the interaction between the calibrated intensities and $(\theta, \theta^{\mathrm{reinit}})$. In contrast, when $\theta\!=\!0$, the mid-price increases by half a tick, as expected. 
\begin{figure}[H]
  \centering
  \begin{minipage}[t]{0.36\textwidth}
    \centering
    \includegraphics[width=\linewidth]{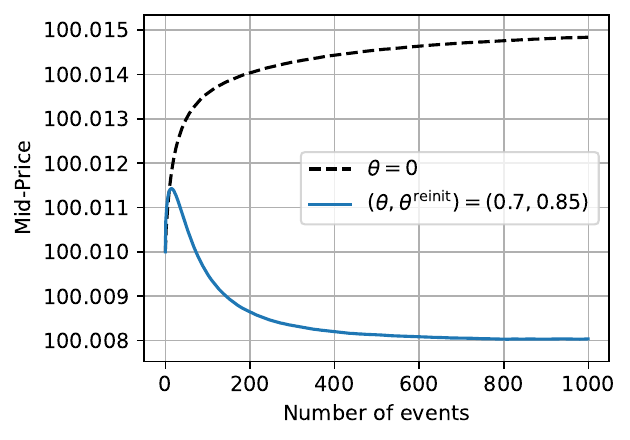}
    \subcaption{Bid refill following a buy order with $(\theta, \theta^{\mathrm{reinit}})\!=\!(0.7, 0.85)$.}
    \label{fig:combined_market_impact_c}
  \end{minipage}
  \hspace{2.2em}
  \begin{minipage}[t]{0.36\textwidth}
    \centering
    \includegraphics[width=\linewidth]{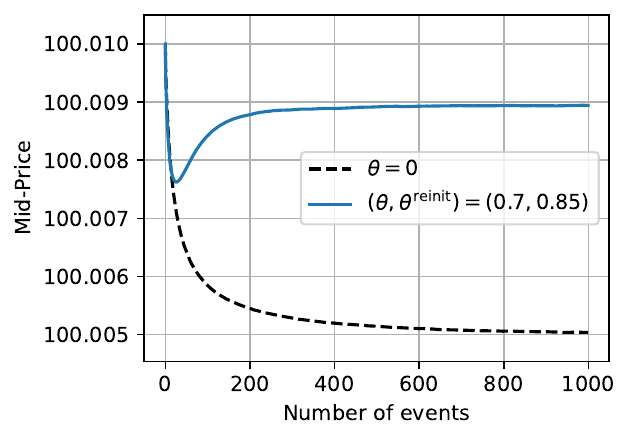}
    \subcaption{Ask refill following a buy order with $(\theta, \theta^{\mathrm{reinit}})\!=\!(0.7, 0.85)$.}
    \label{fig:combined_market_impact_d}
  \end{minipage}

  \caption{Evolution of $\mathbb{E}[p_k]$ after buying the best ask at $k\!=\!0$, averaged over $1{,}000{,}000$ simulations. The black dotted line denotes the reference case $\theta\!=\!0$. The mid-price before the trade is $100.005$.}
\end{figure}
 In the ask-refill scenario (see Figure \ref{fig:combined_market_impact_d}), the opposite sequence occurs. These two price response patterns are consistently observed across all tested parameter combinations. However, the aggregate results (see Figure \ref{fig:combined_market_impact}) vary according to the relative weighting of each of these bid-ask refill scenarios, determined by the probabilities $(\theta, \theta^{\mathrm{reinit}})$.

 \subsection{POPV Benchmarks}
\begin{figure}[H]
  \centering

  \begin{minipage}[t]{0.36\textwidth}
    \centering
    \includegraphics[width=\linewidth]{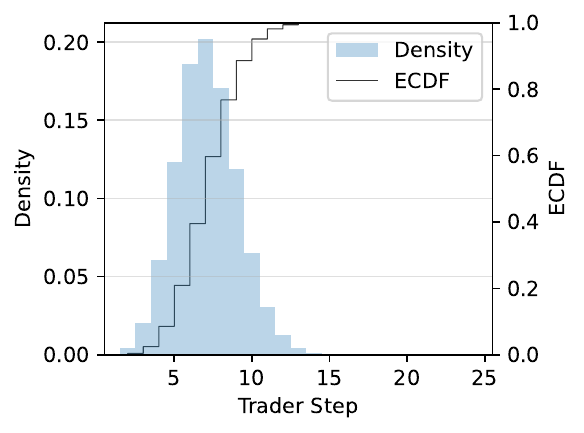}
    \subcaption{POPV1.}
    \label{fig:popv1}
  \end{minipage}
  \hspace{2em}
  \begin{minipage}[t]{0.36\textwidth}
    \centering
    \includegraphics[width=\linewidth]{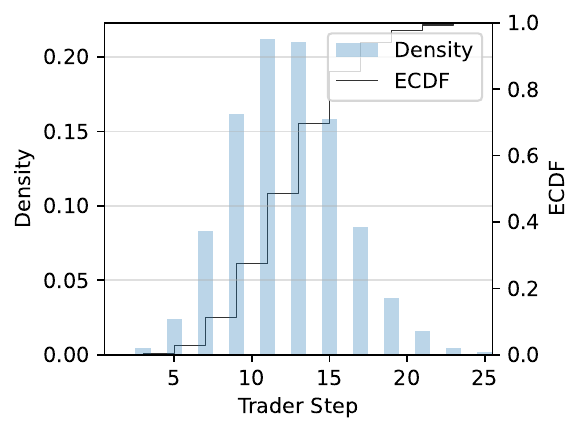}
    \subcaption{POPV2.}
    \label{fig:popv2}
  \end{minipage}

  \begin{minipage}[t]{0.36\textwidth}
    \centering
    \includegraphics[width=\linewidth]{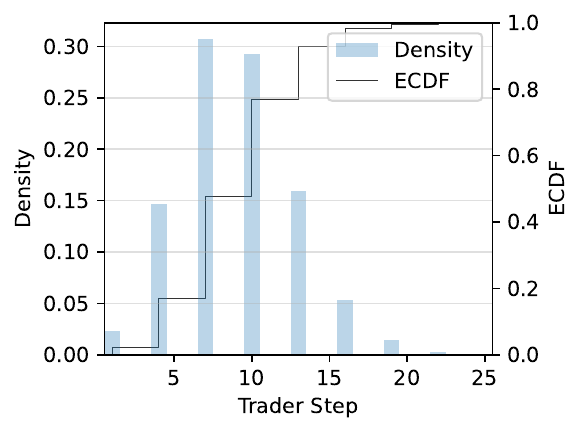}
    \subcaption{POPV3.}
    \label{fig:popv3}
  \end{minipage}
  \hspace{2em}
  \begin{minipage}[t]{0.36\textwidth}
    \centering
    \includegraphics[width=\linewidth]{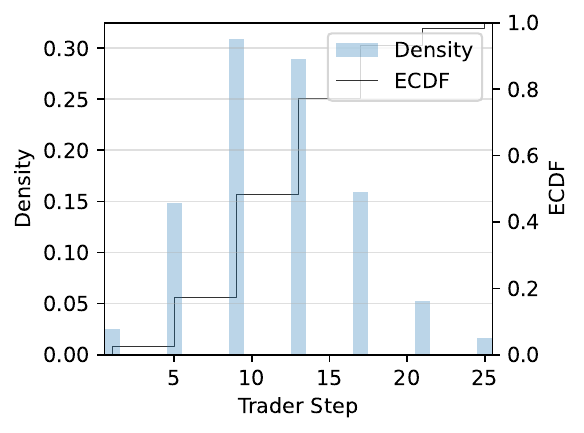}
    \subcaption{POPV4.}
    \label{fig:popv4}
  \end{minipage}

  \caption{Episode length distribution of the POPV benchmarks.}
  \label{fig:ep_length_benchmarks}
\end{figure}

\section{Additional Experiments with Reduced State Spaces} \label{app:reduced_state_spaces}

We evaluate multiple state-space configurations to examine how the dimensionality of the input representation influences the agent’s performance. These configurations help isolate the contribution of individual market features such as best ask volume and bid-ask imbalance. The following results correspond to the case of a binary action space, $\mathcal{A}\!=\!\{0\%, 100\%\}$. For feature importance, we report only the input-gradient analysis, as similar patterns were observed with the SHAP-value analysis. The following figures report results averaged over 20,000 test episodes.
\subsection{Reduced Model: 3D State and Binary Action Space}

In this first configuration, we consider a minimal 3-dimensional state space that includes the remaining inventory, the time and the best ask price. We report the results of the trained RL agent and of the benchmarks in Table \ref{tab:results_c1} and show the reward distribution in Fig. \ref{fig:is_c1}. The RL agent has the second best average reward after TWAP. We observe that TWAP exhibits a much larger variance in rewards compared to DDQN. This stems from the fact that TWAP is entirely agnostic to price and order book dynamics: it performs poorly when prices trend upward and favorably when they decline, resulting in high variability in realized rewards. Thus, the performance of the RL agent is not satisfactory: we would expect it to perform better than TWAP as it is better informed.

\begin{table}[H]
\centering
\begin{tabular}{l|cccccc}
\hline
        & POPV1   & POPV2 & POPV3 & POPV4  & TWAP & DDQN  \\
\hline
Mean    & -0.413  & -0.408  & -0.400  & -0.399 & $\mathbf{-0.365}^{**}$ & -0.386  \\
Std     & \textbf{0.279}  & 0.342  & 0.388  & 0.437 & 0.652 & 0.473 \\
\hline
\end{tabular}
\caption{Reward results. POPV1, POPV2, POPV3, TWAP and DDQN fully execute on all episodes. POPV4 has 2.51 shares remaining in average on 53 episodes (0.265\%).}
\label{tab:results_c1}
\end{table}

\begin{table}[H]
    \centering
    \begin{tabular}{l|c|c}
        \hline
        Feature & \makecell{Gradient \\ (Action $0\%$)} & \makecell{Gradient \\ (Action $100\%$)} \\
        \hline
        Inventory  & 0.45 & 0.42 \\
        Ask Price  & 0.25 & 0.26 \\
        Time       & 0.19 & 0.10 \\
        \hline
    \end{tabular}
    \caption{Input-gradient for actions $0\%$ and $100\%$.}
    \label{tab:fi_actions_combined_c1}
\end{table} 

\begin{figure}[H]
    \centering
    \begin{minipage}{0.3\textwidth}
        \centering
        \includegraphics[width=\linewidth]{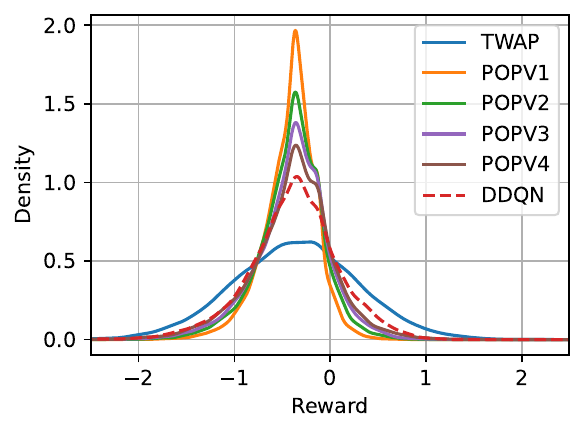}
        \caption{Reward distribution.}
        \label{fig:is_c1}
    \end{minipage}
    \hfill
    \begin{minipage}{0.3\textwidth}
        \centering
        \includegraphics[width=\linewidth]{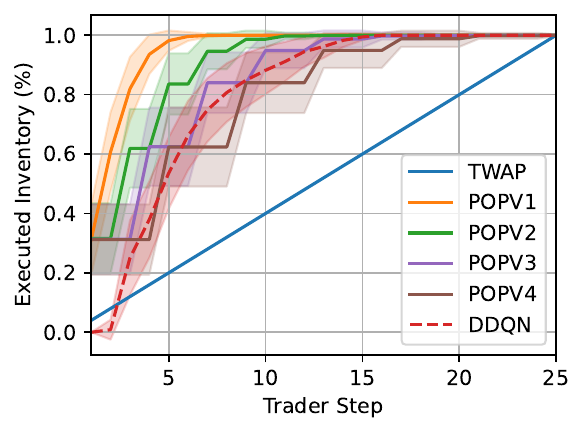}
        \caption{Average inventory trajectory.}
        \label{fig:traj_inv_c1}
    \end{minipage}
    \hfill
    \begin{minipage}{0.3\textwidth}
        \centering
        \includegraphics[width=\linewidth]{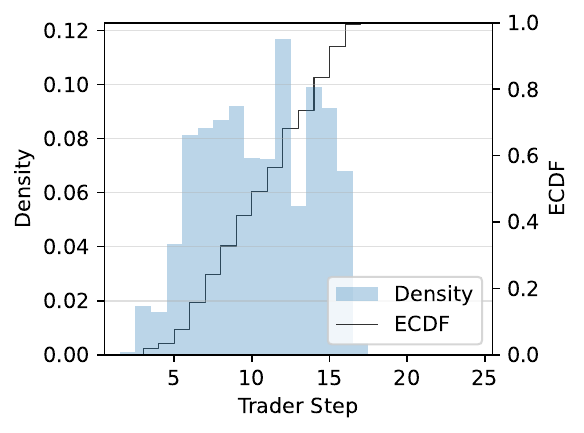}
        \caption{Episode length distribution.}
        \label{fig:ep_length_c1}
    \end{minipage}
\end{figure}

\subsection{Reduced Model: 4D State and Binary Action Space}

In this second configuration, we consider a 4-dimensional state space that includes the remaining inventory, the time, the best ask price and best ask volume. The rationale for including the best ask volume is that it provides the agent with information about the number of shares that would be executed if it decided to consume the entire best ask volume. This feature is particularly relevant because the distribution of best ask volumes is highly right-skewed and exhibits a heavy tail. Thus, the agent can take advantage of unusually large volumes to execute substantial trades in a single action. The results are reported in Table \ref{tab:results_c2}: the RL agent outperforms all the benchmarks in terms of average IS. Figure \ref{fig:traj_inv_c2} shows that the episodes are longer compared to those of the 3-dimensional state-space agent, indicating that the agent has learned to act more patiently in order to exploit more favorable future prices.

\begin{table}[H]
\centering
\begin{tabular}{l|cccccc}
\hline
        & POPV1   & POPV2 & POPV3 & POPV4  & TWAP & DDQN  \\
\hline
Mean    & -0.413  & -0.408  & -0.400  & -0.399 & -0.365 & $\mathbf{-0.325}^{***}$  \\
Std     & \textbf{0.279}  & 0.342  & 0.388  & 0.437 & 0.652 & 0.594  \\
\hline
\end{tabular}
\caption{Reward results. POPV1, POPV2, POPV3 and TWAP fully execute on all episodes. POPV4 has 2.51 shares remaining in average on 53 episodes (0.265\%). DDQN has 6.77 shares remaining in average on 31 episodes (0.155\%).}
\label{tab:results_c2}
\end{table}

\begin{table}[H]
    \centering
    \begin{tabular}{l|c|c}
        \hline
        Feature & \makecell{Gradient \\ (Action $0\%$)} & \makecell{Gradient \\ (Action $100\%$)} \\
        \hline
        Inventory  & 0.33 & 0.31 \\
        Ask Price  & 0.25 & 0.26 \\
        Time       & 0.12 & 0.07 \\
        Ask Volume & 0.06 & 0.06 \\
        \hline
    \end{tabular}
    \caption{Input-gradient for actions $0\%$ and $100\%$.}
    \label{tab:fi_actions_combined_c2}
\end{table}

\begin{figure}[H]
    \centering
    \begin{minipage}[t]{0.3\textwidth}
        \centering
        \includegraphics[width=\linewidth]{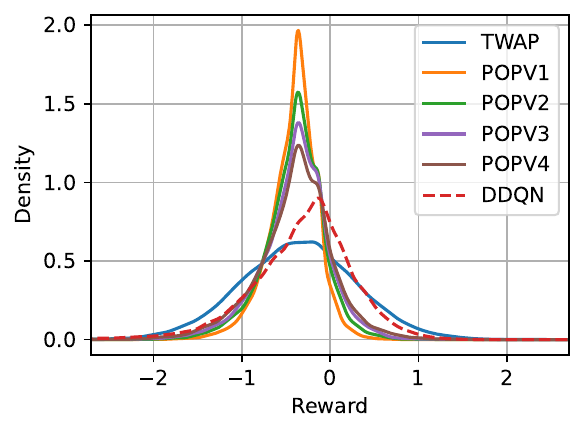}
        \caption{Reward distribution.}
        \label{fig:is_c2}
    \end{minipage}
    \hfill
    \begin{minipage}[t]{0.3\textwidth}
        \centering
        \includegraphics[width=\linewidth]{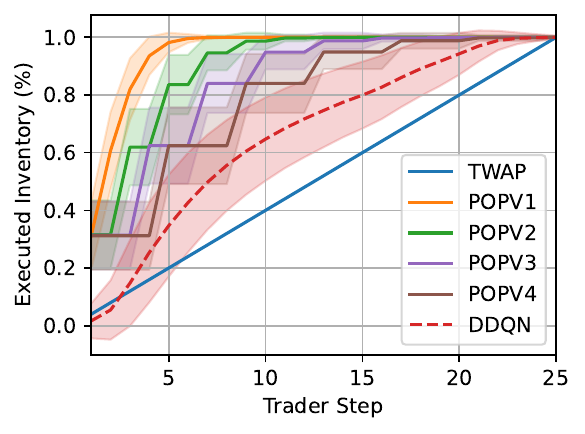}
        \caption{Average inventory trajectory.}
        \label{fig:traj_inv_c2}
    \end{minipage}
    \hfill
    \begin{minipage}[t]{0.3\textwidth}
        \centering
        \includegraphics[width=\linewidth]{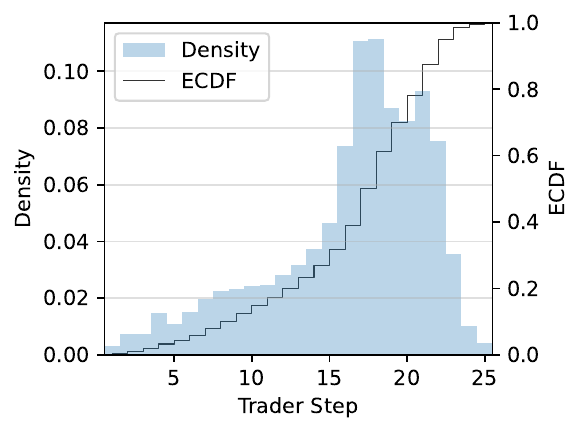}
        \caption{Episode length distribution.}
        \label{fig:ep_length_c2}
    \end{minipage}
\end{figure}

\subsection{Reduced Model: 5D State and Binary Action Space}

In this third configuration, we consider a 5-dimensional state space that includes the remaining inventory, the time, the best ask price and the best bid-ask volume. The rationale behind adding the best bid volume is that it informs the agent of the volume imbalance, a well-established short-term predictor of price movements in market microstructure. The results are reported in Table \ref{tab:results_c3}: the RL agent outperforms all the benchmarks.

\begin{table}[H]
\centering
\begin{tabular}{l|cccccc}
\hline
        & POPV1 & POPV2 & POPV3 & POPV4  & TWAP & DDQN  \\
\hline
Mean    & -0.413  & -0.408  & -0.400 & -0.399  & -0.365  & $\mathbf{-0.290}^{***}$  \\
Std     & \textbf{0.279}  & 0.342  & 0.388  & 0.437 & 0.652  & 0.541  \\
\hline
\end{tabular}
\caption{Reward results. POPV1, POPV2, POPV3 and TWAP fully execute on all episodes. POPV4 has 2.51 shares remaining in average on 53 episodes (0.265\%). DDQN has 2.33 shares remaining in average on 3 episodes (0.015\%).}
\label{tab:results_c3}
\end{table}

\begin{table}[H]
\centering
\begin{minipage}{\textwidth}
    \centering
    \begin{tabular}{l|c|c}
        \hline
        Feature & \makecell{Gradient \\ (Action $0\%$)} & \makecell{Gradient \\ (Action $100\%$)} \\
        \hline
        Inventory  & 0.44 & 0.40\\
        Ask Price  & 0.29 & 0.30 \\
        Time       & 0.12 & 0.06 \\
        Ask Volume & 0.07 & 0.07  \\
        Bid Volume & 0.06 & 0.03  \\
        \hline
    \end{tabular}
    \caption{Input-gradient for actions $0\%$ and $100\%$.}
    \label{tab:fi_action_0}
\end{minipage}
\end{table}

\begin{figure}[H]
    \centering
    \begin{minipage}[t]{0.3\textwidth}
        \centering
        \includegraphics[width=\linewidth]{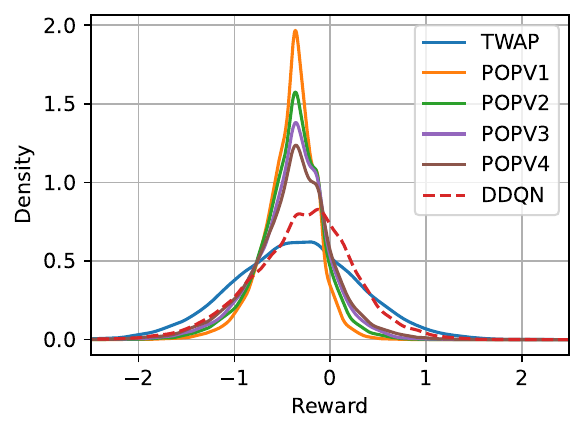}
        \caption{Reward distribution.}
        \label{fig:is_c3}
    \end{minipage}
    \hfill
    \begin{minipage}[t]{0.3\textwidth}
        \centering
        \includegraphics[width=\linewidth]{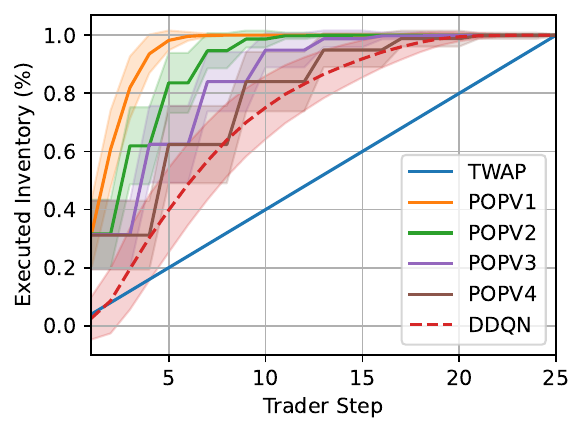}
        \caption{Average inventory trajectory.}
        \label{fig:traj_inv_c3}
    \end{minipage}
    \hfill
    \begin{minipage}[t]{0.3\textwidth}
        \centering
        \includegraphics[width=\linewidth]{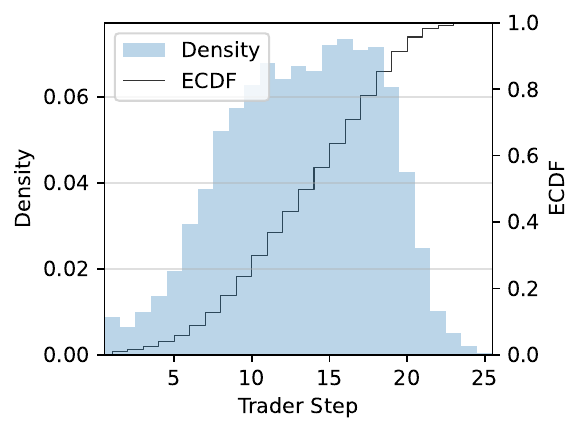}
        \caption{Episode length distribution.}
        \label{fig:ep_length_c3}
    \end{minipage}
\end{figure}

\begin{figure}[H]
  \centering
  \begin{minipage}[t]{0.36\textwidth}
    \centering
    \includegraphics[width=\linewidth]{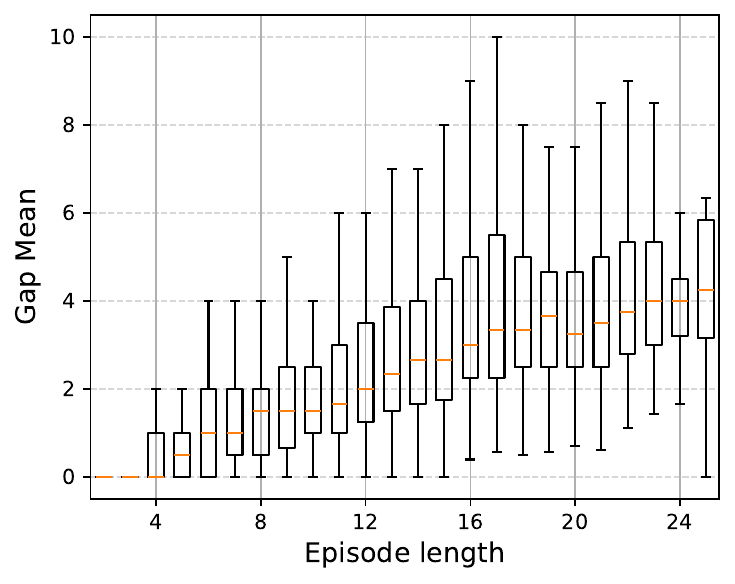}
  \end{minipage}
  \hspace{2em}
  \begin{minipage}[t]{0.36\textwidth}
    \centering
    \includegraphics[width=\linewidth]{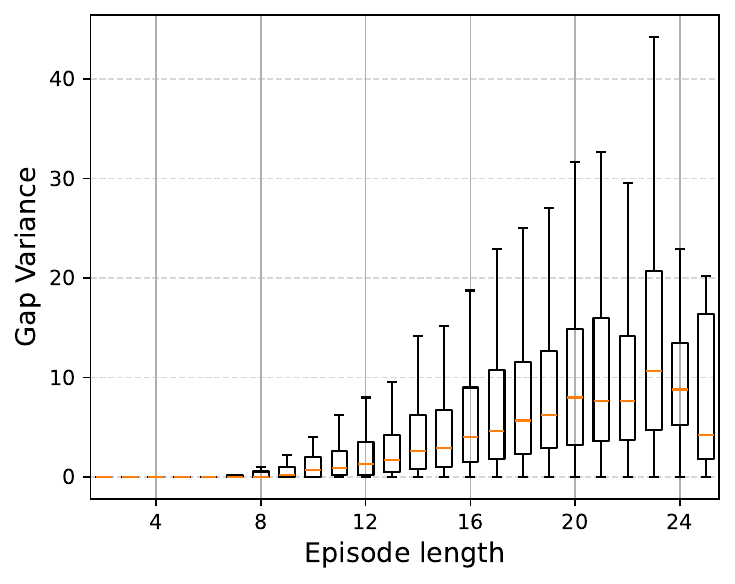}
  \end{minipage}\

  \caption{Boxplots of average gaps (left) and gap variance (right) between consecutive executions per episode length.}
  \label{fig:gaps_combined}
\end{figure}

\begin{figure}[H]
    \centering
    \begin{minipage}[t]{0.4\textwidth}
        \centering
        \includegraphics[width=\linewidth]{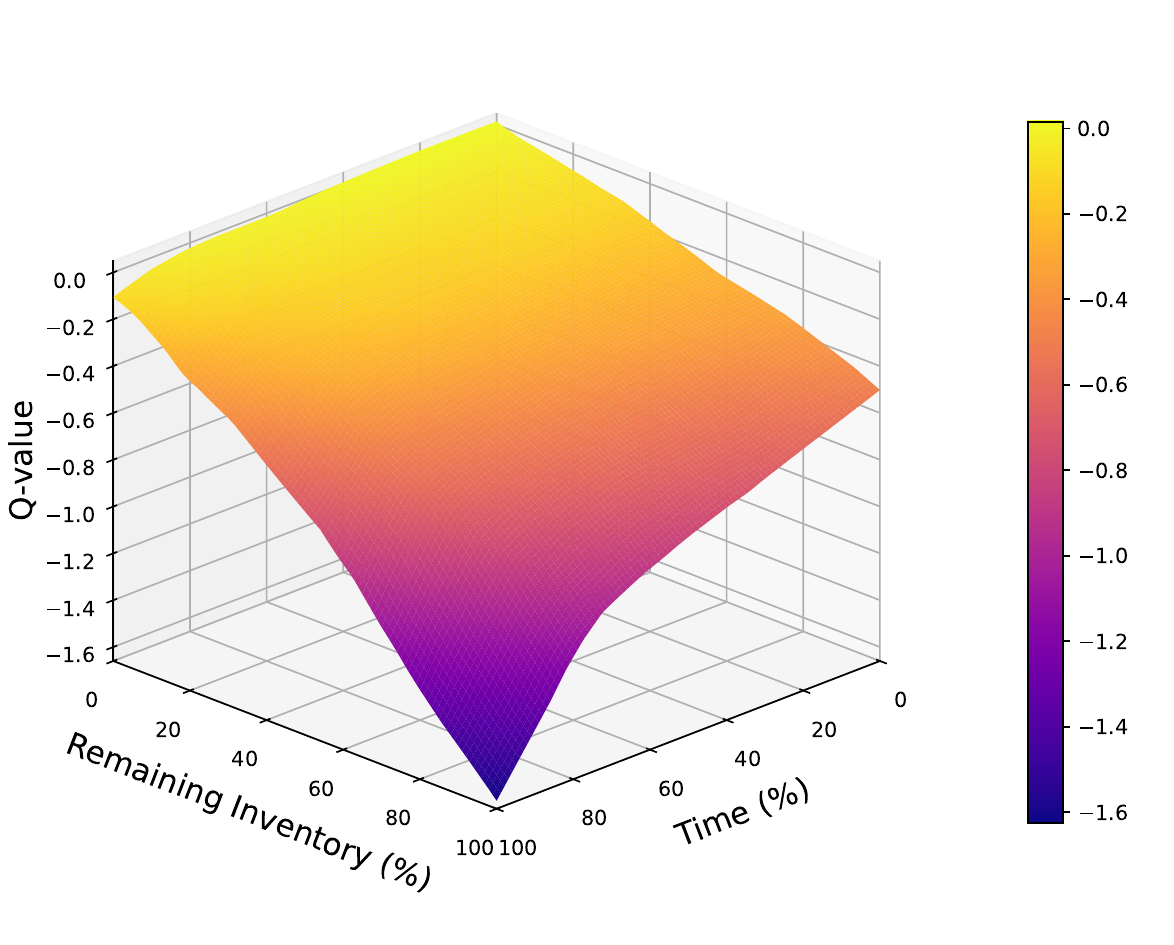}
        \subcaption{Q-values for action 0\%.}
        \label{fig:q_values_action0_c3}
    \end{minipage}\hspace{2.5em}
    \begin{minipage}[t]{0.4\textwidth}
        \centering
        \includegraphics[width=\linewidth]{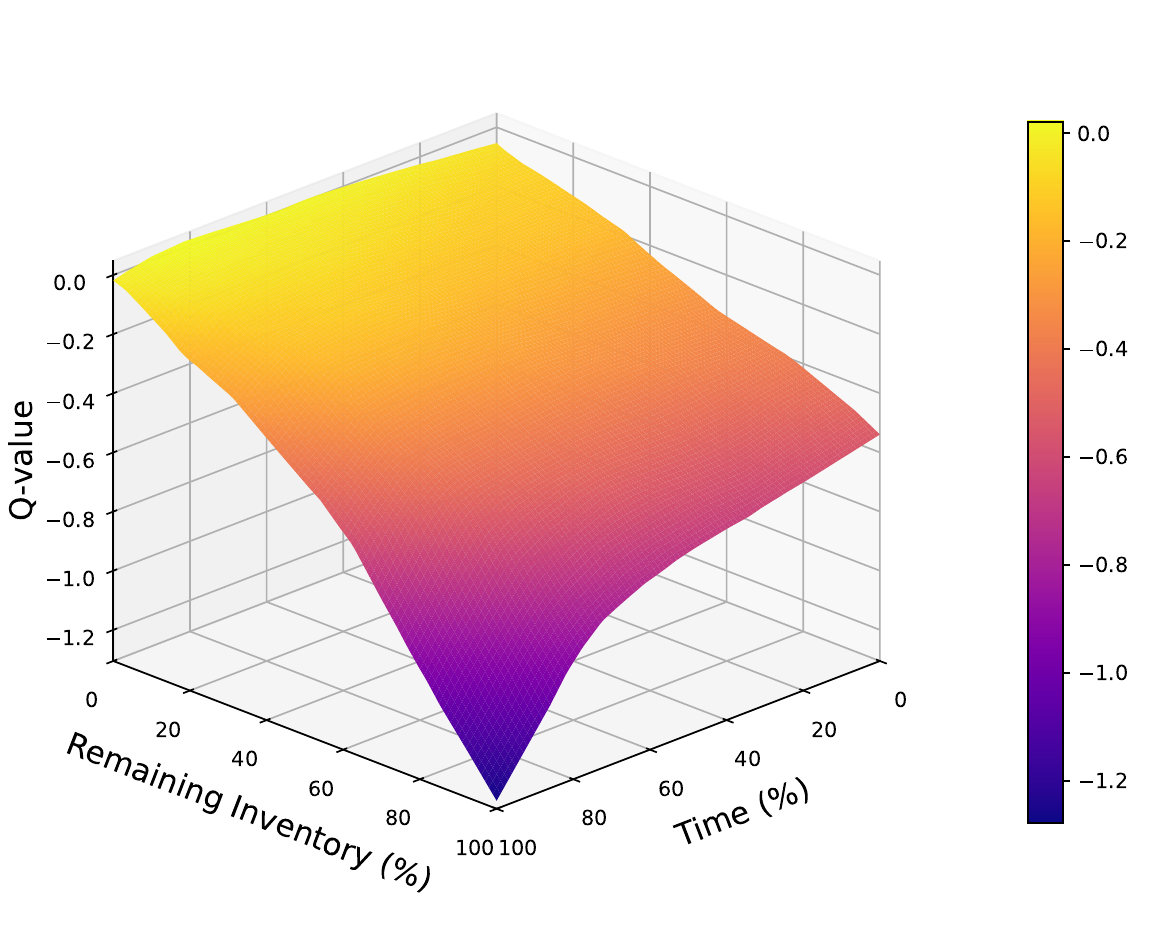}
        \subcaption{Q-values for action 100\%.}
        \label{fig:q_values_action1_c3}
    \end{minipage}
    \caption{Q-value surfaces when the price equals $P_0$ and the best bid/ask volumes are equal to their means.}
    \label{fig:q_values_actions_c3}
\end{figure}

\subsection{Comparing Models with Different State and Action Spaces}\label{sec:compa_analysis}

We summarize and compare the RL performances across all combinations of state and action spaces explored. For the 2-dimensional action space $\mathcal{A} = \{0\%, 100\%\}$, the algorithms DDQN1, DDQN2, and DDQN3 correspond to state spaces of dimensions 3, 4, and 5, respectively. The model denoted DDQN4 refers to the best-performing agent trained with a 3-dimensional action space $\mathcal{A} = \{0\%, 50\%, 100\%\}$ and a 5-dimensional state representation. Our study shows that the RL agent's performance gradually increases by extending the state and action space.

\begin{table}[H]
\centering
\begin{tabular}{l|ccccc}
\hline
         & TWAP & DDQN1   & DDQN2  & DDQN3 & DDQN4 \\
\hline
Mean    & -0.365 & -0.386  & -0.325  & -0.290 & $\mathbf{-0.259}^{***}$ \\
Std     & 0.652 & \textbf{0.473}  & 0.594  & 0.541 & 0.631 \\
\hline
\end{tabular}
\caption{Reward results for different state and action space dimensions.}
\label{tab:results_summary}
\end{table}

\begin{figure}[H]
    \centering
    \begin{minipage}[t]{0.38\textwidth}
        \centering
        \includegraphics[width=\linewidth]{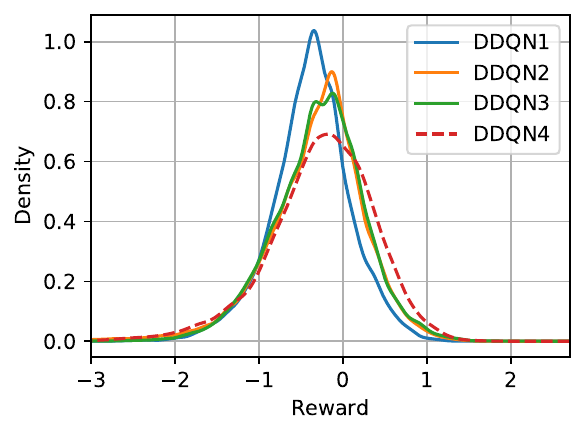}
        \subcaption{Histogram.}
        \label{fig:is_summary_hist}
    \end{minipage}\hspace{2.5em}%
    \begin{minipage}[t]{0.42\textwidth}
        \centering
        \raisebox{1.5em}{\includegraphics[width=\linewidth]{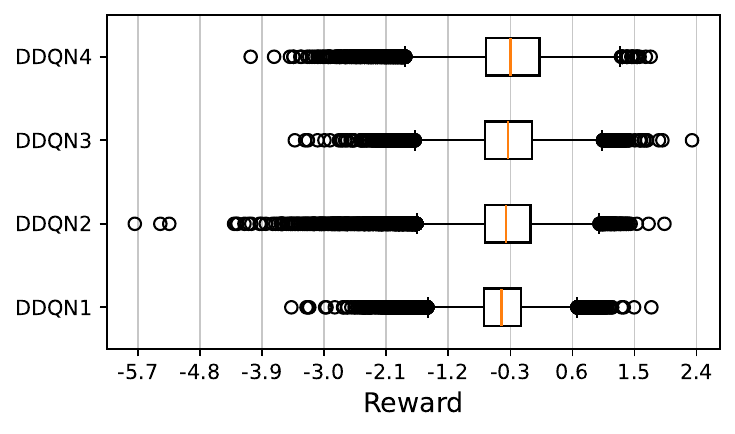}} 
        \subcaption{Boxplot showing the mean.}
        \label{fig:is_summary_boxplot}
    \end{minipage}
    \caption{Reward distribution.}
    \label{fig:is_summary}
\end{figure}

\end{document}